\documentclass[12pt]{article}
\usepackage{cite}
\usepackage{tikz}
\usepackage{amsmath, amsthm, amssymb}
\newtheorem{thm}{Theorem}[section]

\newtheorem{lem}[thm]{Lemma}
\newtheorem{ex}[thm]{Example}
\newtheorem{defn}[thm]{Definition}
\newcommand{\ie}{\textit{i.e.}\ }
\newcommand{\eg}{\textit{e.g.}\ }
\begin{document}

\author{John Faben\footnote{School of Mathematical Sciences, Queen Mary, University of London, Mile End Road, London, E1 4NS, UK}}

\title{The complexity of counting solutions to Generalised Satisfiability Problems modulo $k$ \footnote{Supported by EPSRC grant No.EP/E064906/1, `The Complexity of Counting in Constraint Satisfaction Problems'}}
\maketitle

\begin{abstract}
Generalised Satisfiability Problems (or Boolean Constraint Satisfaction Problems), introduced by Schaefer in 1978, are a general class of problem which allow the systematic study of the complexity of satisfiability problems with different types of constraints. In 1979, Valiant introduced the complexity class parity P, the problem of counting the number of solutions to NP problems modulo two. Others have since considered the question of counting modulo other integers.

We give a dichotomy theorem for the complexity of counting the number of solutions to Generalised Satisfiability Problems modulo integers. This follows from an earlier result of Creignou and Hermann which gave a counting dichotomy for these types of problem, and the dichotomy itself is almost identical. Specifically, counting the number of solutions to a Generalised Satisfiability Problem can be done in polynomial time if all the relations are affine. Otherwise, except for one special case with $k = 2$, it is \#$_k$P-complete.
\end{abstract}
\section{Introduction} \label{SectionIntro}

The complexity class $\bigoplus $P (pronounced `parity P') was first introduced by Valiant in \cite{Val78}. It formalises the question of counting the parity of the number of solutions to NP problems. Formally, it is the class of languages $S$ such that there is a polynomial time Turing Machine which on input $x \in S$ has an odd number of accepting computations and on $x \not \in S$ has an even number. In this paper, we will also deal with the problem of counting the  number of solutions to NP problems modulo other integers. We will give details of the complexity classes used to deal with this problems in section \ref{SectionClasses}. There have been some interesting recent results in this area, with Valiant proving that there exist problems which are complete for $\bigoplus$P but for which counting the number of solutions \textit{modulo 7} can be done in polynomial time \cite{Val06}.

Generalised Satisfiability Problems (also referred to as Boolean Constraint Satisfaction Problems) are a very general class of problem, which provide the base cases for the reductions in a wide variety of complexity theoretic proofs. They were first studied by Schaefer in \cite{Sch78}, where he proved a dichotomy theorem for the decision version of these problems (assuming P$\not=$NP). The Generalised Satisfiability Problem is this: given a set $S$ of boolean relations, the $S$-satisfiability problem is the question of determining whether or not a given $S$-formula is satisfiable, where an $S$-formula is a conjunction of S-relations. The set of all satisfiable $S$-formulae is denoted by SAT($S$). For example, if $S$ were the set of all eight 3-ary boolean relations, SAT($S$) would be the well-known 3-SAT language. Schaefer showed that the decision versions of Generalised Satisfiability Problems can be divided into two classes - those which are NP-complete, and those which can be solved in polynomial time, depending on what type of logical relations is contained in the set $S$. This is in contrast with a result of Ladner that, under the assumption $\mathrm{P \not= NP}$, there is an infinite hierarchy of problems of increasing complexity between problems in P and problems which are NP complete\cite{Lad75}.

A dichotomy theorem for the counting version was proved by Creignou and Hermann in \cite{CreHer96}. They show that the counting version of a Generalised Satisfiability Problem $\#\mathrm{SAT}(S)$ can be solved in polynomial time if the all the relations in $S$ are affine; if not, $\#\mathrm{SAT}(S)$ is $\#P$-complete. A revised version of their proof appears in the monograph \cite{CreKha01}, results from which are used in section \ref{SectionSAT}.

Given this counting dichotomy, we are motivated to pose the question: among those Satisfiability problems for which the counting problem is known to be \#P-complete are there any for which the number of solutions is easy to count modulo some integer $k$? The answer is almost always no. The dichotomy we find in this paper is identical to that found in \cite{CreKha01} except for one difference for the case $k=2$.

\section{The classes \#$_k$P} \label{SectionClasses}

Previous work dealing with the complexity of counting modulo integers (\eg \cite{CaiHem89} \cite{Hert90}) has tended to define the relevant complexity class as Mod$_k$P, the set of languages which have non-zero number of accepting paths modulo $k$ for some Turing Machine $M$. Formally, Mod$_k$P contains for every function $f \in \#P$ the language
\begin{equation*}\{x \in \Sigma^*  \mid  f(x) \not\equiv   0  \pmod k\} \end{equation*}

 For the purposes of the work in this paper, we have chosen to define a slightly different set of classes, which we refer to here as $\#_k$P, and which we think more intuitively capture the notion of counting modulo $k$. Analogous to $\#P$, we define $\#_k$P to be class of problems ``compute $f(x)$ modulo $k$" where $f(x)$ is the number of accepting paths of a polynomial time Turing Machine. Like \#P, this is a class of function problems rather than a class of decision problems. Formally:

\begin{defn} \label{DefnSharpkP}
Let $\#acc_{M_k}$ be the function mapping from an input $x$ to the number of accepting paths of the non-deterministic Turing Machine $M$ on input $x$ modulo $k$. The class $\#_k$P consists of all functions $\#acc_{M_k}$ for all non-deterministic Turing Machines $M$ with polynomial length accepting paths on input $x$.
\end{defn}

 It seems intuitively that there should be problems for which determining the number of solutions modulo $k$ exactly is harder than determining whether the number of solutions modulo $k$ is non-zero. We have been able to construct artificial examples of such problems, but whether any natural problems with this property exist is an open question.

It should be noted that previous papers have used both $\#_k$P and Mod$_k$P to refer to the decision class defined above as Mod$_k$P.

We will need use the notion of a reduction which is parsimonious modulo $k$; just as a parsimonious reduction from one counting problem to another is one which preserves the number of solutions exactly, so a reduction which is parsimonious modulo $k$ is one which preserves exactly the number of solutions modulo $k$. We note in passing that a reduction which is parsimonious is also parsimonious modulo $k$ for all $k$.

\begin{defn} \label{DefnParsimonious}
Given two counting problems $\#A$ and $\#B$, we say there is a parsimonious reduction from $\#A$ to $\#B$ if there exists a function $f$ computable in polynomial time such that for all $x$, $|\{y:(x,y)\in A\}|= \{z:(f(x),z) \in B\}|$.
\end{defn}
\begin{defn} \label{DefnParsimoniousModk}
Given two \#$_k$P counting problems, $\#A$ and $\#B$, we say there is a parsimonious reduction modulo $k$ (a $\#_k$-reduction) from $\#A$ to $\#B$ if there exists a polynomially computable function $f$ such that $|\{y:(x,y)\in A\}|\equiv |\{z:(f(x),z) \in B\}| \pmod k$. In this case we say $\#A \leq_{\#_k} \#B$.
\end{defn}

Again, in an analogy with \#P completeness, we define the notion of \#$_k$P-completeness to with respect to Turing reducibility. Essentially, a problem \#$_k$A is \#$_k$P-complete if every problem in \#$_k$P can be solved in polynomial time given an oracle for \#$_k$A.

\begin{defn} \label{DefnCompleteness}
A counting problem $\#_kA \in \#_k$P is \#$_k$P-complete if for all other problems $\#_kB \in \#_k$P, $\#_kB$ can be solved in polynomial time with a $\#_k$P oracle for $\#_k A$.
\end{defn}

\section{Preliminaries} \label{SectionPrelims}

In \cite{CreKha01}, the counting dichotomy for Generalised Satisfiability Problems is established via reductions to problems referred to in that paper as \#SAT$(\mathrm{OR}_0)$, \#SAT$(\mathrm{OR}_1)$ and \#SAT$(\mathrm{OR}_2)$. These are the problems of counting the number of satisfying assignments of boolean formulae whose constraints are defined by functions of the form $x_i\vee x_j$, $\bar{x_i} \vee x_j$ and $\bar{x_i} \vee \bar{x_j}$ respectively. In this paper, we will use essentially the same reductions to find a dichotomy for counting modulo $k$ for all integer $k$. We therefore begin by proving the following \#$_k$-hardness result.

\begin{thm} \label{ORComplexity}
The problems \#$_k$SAT$(\mathrm{OR}_0)$, \#$_k$SAT$(\mathrm{OR}_1)$ and \#$_k$SAT$(\mathrm{OR}_2)$ are \#$_k$P-complete for all $k$.
\end{thm}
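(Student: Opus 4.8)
The plan is to check membership in $\#_k$P, which is routine, and then to derive $\#_k$P-hardness by revisiting the reductions through which Creignou and Khanna \cite{CreKha01} establish that $\#\mathrm{SAT}(\mathrm{OR}_0)$, $\#\mathrm{SAT}(\mathrm{OR}_1)$ and $\#\mathrm{SAT}(\mathrm{OR}_2)$ are $\#P$-complete, keeping track of the number of solutions modulo $k$ at every step rather than exactly.

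Membership is immediate from Definition \ref{DefnSharpkP}: satisfiability of an $\mathrm{OR}_i$-formula on variables $x_1,\dots,x_n$ is checkable in polynomial time, so a nondeterministic machine that guesses an assignment and accepts precisely when every clause is satisfied has, modulo $k$, exactly the required number of accepting paths. Hence $\#_k\mathrm{SAT}(\mathrm{OR}_0),\#_k\mathrm{SAT}(\mathrm{OR}_1),\#_k\mathrm{SAT}(\mathrm{OR}_2)\in\#_k$P.

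For hardness I would first fix a convenient complete problem to reduce from. The Cook--Levin reduction, and the standard translation of a CNF formula into a $3$-CNF formula, are parsimonious --- in the latter case because, with the original variables assigned, the auxiliary variables created when splitting a long clause take a unique consistent value --- hence both are parsimonious modulo $k$ by the remark following Definition \ref{DefnParsimoniousModk}; since a $\#_k$-reduction is in particular a one-query Turing reduction in the sense of Definition \ref{DefnCompleteness}, it follows that $\#_k\mathrm{3SAT}$ and $\#_k\mathrm{SAT}$ are $\#_k$P-complete. It then remains to give, for each of the three $\mathrm{OR}$-problems, a $\#_k$-reduction from $\#_k\mathrm{3SAT}$ --- or, failing that, a Turing reduction making a bounded number of oracle calls and combining their answers by ring operations valid modulo every $k$. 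Two remarks reduce the work. First, complementing every variable carries an $\mathrm{OR}_0$-formula to an $\mathrm{OR}_2$-formula and back, so $\#_k\mathrm{SAT}(\mathrm{OR}_0)$ and $\#_k\mathrm{SAT}(\mathrm{OR}_2)$ are interreducible by $\#_k$-reductions, and both amount to counting the independent sets of a graph modulo $k$; similarly an $\mathrm{OR}_1$-formula counts the up-sets of a digraph modulo $k$. Second, one should design the clause gadgets so that each application introduces auxiliary variables admitting (for the modulus at hand) exactly one consistent extension on the configurations that should be counted and none on the others --- or at least a number of extensions that is a unit modulo $k$ --- since this is exactly what makes the whole construction preserve the count of solutions modulo $k$ rather than merely its value or whether it is nonzero.

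The main obstacle is precisely this parsimony-modulo-$k$ requirement, and it is where the modular setting genuinely differs from the exact one. An \emph{exactly} parsimonious reduction preserves satisfiability, and every $\mathrm{OR}_i$-formula is satisfiable, so there is no exactly parsimonious reduction from $\#\mathrm{3SAT}$ to these problems: the classical $\#P$-hardness proofs are unavoidably non-parsimonious, recovering the target count by interpolation across several oracle calls or by dividing out an instance-dependent over-counting factor. Such steps can break down modulo $k$ when $k$ shares a factor with the interpolation points or with the factor removed --- the same obstruction that lets counting modulo $7$ be easy where exact counting is hard \cite{Val06}. The technical heart of the argument is therefore to replace these steps by reductions whose correctness modulo $k$ uses only additions, multiplications and divisions by units: concretely, to build, using only clauses of the prescribed sign pattern and possibly depending on $k$, a gadget encoding a general clause (equivalently, the bipartite incidence structure behind independent-set hardness) whose number of consistent extensions is $\equiv 1$ on the ``satisfying'' configurations and $\equiv 0$ on the rest. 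Given such gadgets, composing the resulting chain of $\#_k$-reductions from $\#_k\mathrm{3SAT}$ down to each of $\#_k\mathrm{SAT}(\mathrm{OR}_0)$, $\#_k\mathrm{SAT}(\mathrm{OR}_1)$ and $\#_k\mathrm{SAT}(\mathrm{OR}_2)$, and invoking the closure of $\#_k$P-completeness under such reductions together with the membership established above, completes the proof.
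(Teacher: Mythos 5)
Your outline follows the same route as the paper (reduce from $\#_k$SAT, pass through independent sets, get $\mathrm{OR}_0$ and $\mathrm{OR}_2$ from one another by complementation, and get $\mathrm{OR}_1$ from bipartite independent sets), and you correctly identify the genuine obstruction: since every $\mathrm{OR}_i$-formula is satisfiable there is no parsimonious reduction from $\#$3SAT, and the classical interpolation/division steps need not survive reduction modulo $k$. But the proposal stops exactly where the proof has to start. Everything rests on the sentence ``concretely, to build \ldots a gadget \ldots whose number of consistent extensions is $\equiv 1$ on the satisfying configurations and $\equiv 0$ on the rest. Given such gadgets \ldots completes the proof'' --- and no such gadget is ever exhibited, nor is it argued that one exists, let alone one realisable inside the restricted clause types (in particular inside the bipartite/implicational world needed for $\mathrm{OR}_1$). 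That existence and verification is the entire technical content of the theorem in the paper: Lemma \ref{ISFormula} shows that attaching copies of a graph $H$ with $|\mathcal{I}(H)|\equiv 0 \pmod k$ annihilates, modulo $k$, exactly the independent sets that do not encode assignments; Theorem \ref{BiIndSet} wires such copies into a bipartite variable/clause structure; and Lemma \ref{Hexists} actually produces $H$ --- $K_1$ for $k=2$ and $K_{p-2,p-2}$ for odd primes $p$, where $|\mathcal{I}(K_{p-2,p-2})|=2^{p-1}-1\equiv 0\pmod p$ by Fermat's little theorem while $|\mathcal{I}(H\setminus\{h\})|\equiv -2^{p-3}\not\equiv 0\pmod p$. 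Without some construction of this kind your argument is a (correct) research plan, not a proof.

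A second, related gap is the treatment of composite moduli. You allow yourself ``divisions by units,'' but the good configurations in any such gadget construction are naturally over-counted by a fixed factor (in the paper, by $|\mathcal{I}(H\setminus\{h\})|$ per gadget), and for composite $k$ there is no a priori reason this factor is invertible modulo $k$. The paper resolves this by proving hardness modulo each prime $p$ (where every nonzero residue is a unit) and using that an oracle for counting modulo $k$ yields the count modulo any prime divisor of $k$; your chain of reductions, as described for a single arbitrary $k$, would need either this reduction-to-primes step or a gadget whose good-configuration count is exactly $1$ modulo $k$, and neither is supplied.
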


The proof of this theorem will be in several stages, and will be by reduction of the satisfiability problems to counting the number of independent sets in various classes of graph. Specifically, to \#$_k$INDEPENDENT-SET, the problem of counting the number of independent sets in a general graph modulo $k$ and to \#$_k$BIPARTITE-INDEPENDENT-SET, the problem of counting the number of independent sets in a bipartite graph modulo $k$.

We begin by noting that \#$_k$SAT$(\mathrm{OR}_2)$ is trivially reducible to \#$_k$SAT$(\mathrm{OR}_0)$, simply by taking the negation of each literal in the formula. We then make use of the following lemma, which simply states a well-known equivalence between the two problems:
\begin{lem}\label{Mon2SATisIS}
\#$_k$INDEPENDENT-SET $\leq_{\#_k}$ \#$_k$SAT$(\mathrm{OR}_2)$ for all $k$.
\end{lem}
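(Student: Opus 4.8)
The plan is to exhibit a genuinely parsimonious reduction from $\#_k$INDEPENDENT-SET to $\#_k$SAT$(\mathrm{OR}_2)$; since a parsimonious reduction is parsimonious modulo $k$ for every $k$ (as remarked after Definition~\ref{DefnParsimoniousModk}), this at once yields the claimed $\leq_{\#_k}$, so all the work is in producing the reduction and checking it is exactly count-preserving.

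Given an instance $G=(V,E)$ of $\#_k$INDEPENDENT-SET, I would introduce one boolean variable $x_v$ for each vertex $v\in V$ and, for each edge $\{u,v\}\in E$, the clause $\bar{x_u}\vee\bar{x_v}$, which is an $\mathrm{OR}_2$-relation. Let $\phi_G$ be the conjunction of all these clauses over the variables $\{x_v : v\in V\}$. Then $\phi_G$ is by construction an $\mathrm{OR}_2$-formula, and it is plainly computable from $G$ in polynomial time. (If $E=\varnothing$ one can instead output a fixed trivially satisfiable $\mathrm{OR}_2$-formula on these variables, or simply note that the empty conjunction behaves as desired; this degenerate case does not affect anything.)

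The heart of the argument is the bijection between assignments and vertex subsets. To an assignment $\sigma$ of the variables of $\phi_G$ associate $S_\sigma=\{v\in V : \sigma(x_v)=1\}$. A clause $\bar{x_u}\vee\bar{x_v}$ is satisfied by $\sigma$ exactly when $u$ and $v$ are not both in $S_\sigma$, so $\sigma\models\phi_G$ if and only if $S_\sigma$ contains no edge of $G$, i.e. if and only if $S_\sigma$ is an independent set of $G$. The map $\sigma\mapsto S_\sigma$ is a bijection from all assignments to all subsets of $V$, hence restricts to a bijection from satisfying assignments of $\phi_G$ to independent sets of $G$. Thus $|\{\sigma : \sigma\models\phi_G\}|$ equals the number of independent sets of $G$ exactly, and in particular they agree modulo $k$, which is precisely what Definition~\ref{DefnParsimoniousModk} requires.

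I do not expect any real obstacle: the content is just the standard dictionary identifying $\mathrm{OR}_2$-clauses with ``not both endpoints chosen'' edge constraints. The only points needing a moment's care are verifying that $\phi_G$ meets the formal definition of an $\mathrm{OR}_2$-formula (a conjunction of relations of the form $\bar{x_i}\vee\bar{x_j}$) and dispatching the edgeless corner case; everything else is immediate from the bijection.
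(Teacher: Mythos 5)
Your reduction is correct and is exactly the one the paper uses: one variable per vertex, the clause $\bar{x_u}\vee\bar{x_v}$ per edge, a bijection between satisfying assignments and independent sets, and the observation that a parsimonious reduction is parsimonious modulo $k$. Nothing further is needed.
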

\begin{proof}

With a graph $G$ on vertices $\{v_1,\ldots,v_n\}$ we associate the $\mathrm{OR}_2$ formula $F$ on the variables $\{x_1, \ldots,x_n\}$ such that the clause $\bar{x_i}\vee \bar{x_j}$ appears in $F$ if and only if there is an edge between vertices $v_i$ and $v_j$ in $G$. Then given an independent set, $I$ in $G$, the truth assignment, $s$, which satisfies $s(x_i) = \textrm{true} \iff v_i \in I$ is a satisfying assignment for $F$, and vice versa, given a satisfying assignment, the corresponding vertex set is independent. So the satisfying assignments of $F$ are in one-to-one correspondence with the independent sets of $G$, and the reduction is parsimonious.

Since the reduction given above is parsimonious, it is parsimonious modulo $k$ for all $k$.
\end{proof}

We will also make use of the following lemma.

\begin{lem}\label{OR1BiIS}
\#$_k$BIPARTITE-INDEPENDENT-SET $\leq_{\#_k}$ \#$_k$SAT$(\mathrm{OR_1})$ for all $k$.
\end{lem}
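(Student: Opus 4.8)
The plan is to mimic the reduction used in Lemma~\ref{Mon2SATisIS}, exploiting the fact that an $\mathrm{OR}_1$ clause $\bar{x_i}\vee x_j$ is logically equivalent to the implication $x_i \to x_j$. The key observation is that in a bipartite graph we are free to record membership in an independent set ``positively'' on one side of the bipartition and ``negatively'' on the other; an edge between the two sides then becomes exactly a clause of the form $\bar{x}\vee y$.

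Concretely, given a bipartite graph $G$, I would first compute a partition of its vertex set into independent sets $U$ and $V$ (this can be done in polynomial time by $2$-colouring each connected component; isolated vertices may be placed on either side). Introduce a variable $x_u$ for each $u\in U$ and a variable $y_v$ for each $v\in V$, and for every edge $uv$ of $G$ with $u\in U$ and $v\in V$ put the clause $\bar{x_u}\vee y_v$ into the formula $F$. Given a truth assignment $s$, set $I_s = \{u\in U: s(x_u)=\mathrm{true}\}\cup\{v\in V: s(y_v)=\mathrm{false}\}$.

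It then remains to check that $s\mapsto I_s$ is a bijection between the satisfying assignments of $F$ and the independent sets of $G$. If $s$ satisfies $F$ and $uv$ is an edge with $u\in U$, $v\in V$, then the clause $\bar{x_u}\vee y_v$ forces $s(x_u)=\mathrm{false}$ or $s(y_v)=\mathrm{true}$, \ie at least one of $u,v$ lies outside $I_s$, so $I_s$ is independent; conversely, from any independent set one reads off a satisfying assignment by the same recipe, and the two maps are mutually inverse. Hence the reduction is parsimonious, and in particular parsimonious modulo $k$ for every $k$.

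The argument is essentially routine; the only points that need a little care are that the choice of bipartition (and the placement of isolated vertices) does not affect the number of independent sets of $G$, so that the reduction is well defined, and that every edge of a bipartite graph has exactly one endpoint in each of $U$ and $V$, so that each clause $\bar{x_u}\vee y_v$ is unambiguously an $\mathrm{OR}_1$ relation. I do not expect any genuine obstacle here.
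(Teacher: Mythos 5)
Your construction is correct, and it is in essence exactly the parsimonious reduction of Linial that the paper's proof simply cites without spelling out: recording membership positively on one side of the bipartition and negatively on the other turns each edge into an $\mathrm{OR}_1$ clause $\bar{x_u}\vee y_v$, giving a bijection between independent sets and satisfying assignments, hence a reduction parsimonious modulo $k$ for every $k$. Your added care about the choice of bipartition and isolated vertices is fine and does not change anything, since the bijection works for any valid $2$-colouring.
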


\begin{proof}
The reduction given by Linial in \cite{lin86} preserves the number of solutions exactly, and so preserves the number of solutions modulo $k$ for all $k$.
\end{proof}

In the rest of this section, we will show that the problems, \#$_k$INDEPENDENT-SET and \#$_k$BIPARTITE-INDEPENDENT-SET are both \#$_k$-complete for all integer $k$. This will be done by reduction from the problem \#$_k$SAT, which we define as the problem of counting the number of satisfying assignments of a boolean formula in constructive normal form modulo $k$. This problem is known to be \#$_k$-complete since the reduction used in Cook's Theorem can be made parsimonious \cite{Sim75}.

The reductions used in the proofs which follow all have the same basic structure. Given a SAT-formula, we produce a graph in which the independent sets with a certain property all correspond to satisfying assignments of $F$, and in which the independent sets which do not have this property can be partitioned into $k$ subsets of equal size, the total number of which is therefore zero modulo $k$. This allows us to produce a formula for the number of independent sets modulo $k$, as described in lemma \ref{ISFormula}.

In the following, $\mathcal{I}(G)$, where $G$ is a graph will denote the set of independent sets of $G$. We will also use $\mathcal{I}(G;X)$, where $X$ is a set of vertices in a graph $G$ to denote the set of independent sets of $G$ all of whose vertices lie in $X$. We will also use $N(x)$ to represent the (open) neighbourhood in $G$ of a vertex $x$ and $N_G(H)$ to represent the (open) neighbourhood of a subset, $X \subset V(G)$.

\begin{lem}\label{ISFormula}
Consider a graph $G$ with the following structure:

$G$ consists of a set of vertices $X$, along with $n$ copies of a graph $H$, $\{H_1,\ldots, H_n\}$, each of which contains distinguished vertex $h_i$. The edges in $G$ either go between vertices in one copy of $H$, between vertices in $X$ or between some distinguished vertex $h_i$ and a vertex in $X$. Furthermore, $H$ has the property that the total number of independent sets in $H$ is congruent to zero modulo $k$.

The total number of independent sets in $G$ is congruent modulo $k$ to:
\begin{equation}
\displaystyle\sum_{I_0 \in \mathcal{I}(G;X)}\prod_{i=1}^n \min\{|I_0\cap N_G(H_i)|,1\} \times |\mathcal{I}(G;H_i\backslash\{h_i\})|
\end{equation}

\end{lem}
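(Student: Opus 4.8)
The plan is to count $\mathcal{I}(G)$ directly by conditioning on the part of an independent set lying in $X$, and to use the restricted edge structure to show that the number of consistent extensions into the copies factorises over $H_1,\dots,H_n$.

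First I would write an arbitrary $I\in\mathcal{I}(G)$ as a disjoint union $I=I_0\cup I_1\cup\cdots\cup I_n$ with $I_0=I\cap X$ and $I_j=I\cap V(H_j)$. Since the only edges leaving a copy $H_j$ run from its distinguished vertex $h_j$ to vertices of $X$ (so that $N_G(H_j)=N(h_j)\subseteq X$), the statement that $I$ is independent is equivalent to the conjunction of: $I_0\in\mathcal{I}(G;X)$; each $I_j\in\mathcal{I}(H_j)$; and, for every $j$, if $h_j\in I_j$ then $I_0\cap N_G(H_j)=\emptyset$. Hence, for fixed $I_0\in\mathcal{I}(G;X)$, the number of independent sets of $G$ whose $X$-part is $I_0$ equals $\prod_{j=1}^n c_j(I_0)$, where $c_j(I_0)$ counts the independent sets of $H_j$ compatible with $I_0$.

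Next I would evaluate $c_j(I_0)$ in the two cases. If $I_0\cap N_G(H_j)=\emptyset$ there is no constraint on $I_j$, so $c_j(I_0)=|\mathcal{I}(H_j)|=|\mathcal{I}(H)|\equiv 0\pmod k$ by hypothesis. If $I_0\cap N_G(H_j)\neq\emptyset$ then $h_j\notin I_j$, so $I_j$ ranges over independent sets contained in $V(H_j)\setminus\{h_j\}$, a count which is exactly $|\mathcal{I}(G;H_j\setminus\{h_j\})|$ because $H_j\setminus\{h_j\}$ has no edges to the rest of $G$. Summing over $I_0$ gives
\begin{equation*}
|\mathcal{I}(G)|=\sum_{I_0\in\mathcal{I}(G;X)}\prod_{j=1}^n c_j(I_0).
\end{equation*}
Reducing modulo $k$: in any term for which some $j$ has $I_0\cap N_G(H_j)=\emptyset$, the factor $c_j(I_0)=|\mathcal{I}(H)|$ is divisible by $k$, so that term vanishes mod $k$. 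Thus only terms with $I_0\cap N_G(H_j)\neq\emptyset$ for all $j$ survive, and for those every factor is $|\mathcal{I}(G;H_j\setminus\{h_j\})|$. Since $\min\{|I_0\cap N_G(H_j)|,1\}$ is $1$ exactly when $I_0\cap N_G(H_j)\neq\emptyset$ and $0$ otherwise, the surviving sum is precisely $\sum_{I_0\in\mathcal{I}(G;X)}\prod_{j=1}^n\min\{|I_0\cap N_G(H_j)|,1\}\times|\mathcal{I}(G;H_j\setminus\{h_j\})|$, as claimed.

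The only step requiring real care — and the one that uses the hypotheses essentially — is the decoupling: one must verify that the prescribed location of the edges of $G$ genuinely splits the independence condition into one condition per copy, coupled to $I_0$ only through the single vertex $h_j$, and that an independent set of $G$ contained in $H_j\setminus\{h_j\}$ is the same thing as an independent set of the graph $H_j\setminus\{h_j\}$. Once that is in place, the rest is just the observation that a product containing a factor divisible by $k$ is itself $\equiv 0 \pmod k$.
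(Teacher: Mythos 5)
Your proof is correct and follows essentially the same route as the paper: partition the independent sets of $G$ by their intersection $I_0$ with $X$, observe that the extensions into the copies $H_1,\dots,H_n$ factorise because the copies only attach to $X$ through the $h_j$, and note that any class with some $I_0\cap N_G(H_j)=\emptyset$ contributes a factor $|\mathcal{I}(H)|\equiv 0\pmod k$. Your presentation is slightly cleaner in that you state the exact identity $|\mathcal{I}(G)|=\sum_{I_0}\prod_j c_j(I_0)$ before reducing modulo $k$, but the underlying argument is the paper's.
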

\begin{proof}
The relevant intuition for this proof is that if we have two sets of vertices, say $X$ and $Y$, satisfying $N_G(X)\cap Y = \emptyset$ then $|\mathcal{I}(G;X\cup Y)| = |\mathcal{I}(G;X)| \times |\mathcal{I}(G;Y)|$. This is because any independent set which lies entirely in $X\cup Y$ is the union of an independent set in $X$ and an independent set in $Y$, and each such union in an independent set by the condition on the neighbourhoods.

We note that if $J \in \mathcal{I}(G)$ is an independent set in $G$ then $I = J \cap X$ is an independent set in $X$. We partition the independent sets of $G$ according to their intersection with $X$ - we then count the number of independent sets in each partition modulo $k$ and take the sum.

Let $I$ be an independent set in $X$ and let $[I]$ denote the set of independent sets in $G$ whose intersection with $X$ is $I$. Now we consider two cases.

First, assume that there is some subgraph $H_i$ such that the neighbourhood of $H_i$ does not share any vertices with $I$ (\ie such that $I\cap N_G(H_i) = \emptyset$). Now, any independent set in $[I]$ can be written as the union of an independent set in $H_i$ and an independent set in $G\backslash H_i$ the intersection of which with $X$ is $I$. Furthermore, every such union is an independent set in $[I]$. Then the total number of independent sets in $[I]$ is congruent modulo $k$ to $|\mathcal{I}(H_i)|$ multiplied by the number of independent sets in $G\backslash H_i$ whose intersection with $X$ is $I$, but since $|\mathcal{I}(H_i)|$ is congruent to zero modulo $k$, then $|[I]|$ is congruent to zero modulo $k$. Note that in this case the product term in the summation above always evaluates to zero - giving a correct count modulo $k$ of the size of $[I]$.

Now, assume that for all $i$, the neighbourhood of $H_i$ (and therefore the neighbourhood of $h_i$ does contain some vertex in $I$ ($I\cap N(H_i) \not = \emptyset$). Then any independent set in $[I]$ can be written as the union of $I$ and $n$ different independent sets $\{J_1..,J_n\}$ such that $J_i$ is entirely contained in $V(H_i\backslash\{h_i\})$ and, once again, each such union is an independent set in $[I]$. The total number of such unions is clearly $\prod_{i=1}^n |\mathcal{I}(G;H_i\backslash\{h_i\})|$, and since the minimum of $|I\cap N_G(H_i)|$ and 1 is equal to 1 for all $i$ in this case, this is equal to the product given in the theorem.
\end{proof}

The theorem \ref{ISparity} which we prove next is in fact a consequence of the theorem \ref{BiIndSet} which we prove below. However, since the reduction used here is probably easier to follow, and is similar in structure to that used in the later proof, we will give the construction of this reduction explicitly.

\begin{thm}\label{ISparity}$\bigoplus$INDEPENDENT-SET is $\bigoplus$P-complete\end{thm}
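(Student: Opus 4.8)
The plan is to reduce $\bigoplus\mathrm{SAT}$ to $\bigoplus\mathrm{INDEPENDENT\textrm{-}SET}$ by building, from a CNF formula $F$, a graph $G$ whose independent sets split into a ``good'' collection in bijection with the satisfying assignments of $F$ and a ``bad'' collection whose size is even, and then invoking Lemma~\ref{ISFormula} (with $k=2$) to read off the parity. First I would set up the variable gadget: for each variable $x_j$ of $F$, put a pair of adjacent vertices $x_j^{+}, x_j^{-}$ in the set $X$, so that a maximal-in-$X$ independent set picks exactly one of the two and thereby encodes a truth assignment; the edges inside $X$ are exactly these $n$ matching edges. For each clause $C_i$ of $F$ I would attach a copy $H_i$ of a fixed gadget graph $H$ with a distinguished vertex $h_i$: the gadget is wired so that $h_i$ can be added to an independent set precisely when at least one literal of $C_i$ is set true, i.e. precisely when the neighbourhood $N_G(H_i)$ meets the current independent set $I_0 \subseteq X$. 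Concretely one can realise $H$ as a small fan of vertices attached to $h_i$, with the literal-vertices of $X$ occurring in $C_i$ joined to the appropriate vertices of $H_i$, and $h_i$ joined to those literal vertices, so that $h_i \in $ the independent set forces all chosen literals of $C_i$ to be true--- wait, I want the opposite polarity, so I join $h_i$ to the \emph{negations} of the literals; the bookkeeping here is routine graph wiring.

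The two numerical conditions I need are (i) the total number of independent sets of $H$ is $\equiv 0 \pmod 2$, which I arrange by padding $H$ with an isolated extra vertex (doubling $|\mathcal I(H)|$) or by choosing $H$ to have an automorphism of order $2$ with no fixed independent set; and (ii) for each clause the contribution $|\mathcal I(G; H_i\setminus\{h_i\})|$ should be $\equiv 1 \pmod 2$, so that the product in Lemma~\ref{ISFormula} collapses to $\prod_i \min\{|I_0 \cap N_G(H_i)|,1\}$, which is $1$ exactly when $I_0$ satisfies every clause and $0$ otherwise. With (i) and (ii) in hand, Lemma~\ref{ISFormula} gives
\begin{equation*}
|\mathcal I(G)| \equiv \sum_{I_0 \in \mathcal I(G;X)} \prod_{i=1}^{n} \min\{|I_0 \cap N_G(H_i)|,1\} \equiv \#\{\text{satisfying assignments of } F\} \pmod 2,
\end{equation*}
where the last step uses that the $I_0 \in \mathcal I(G;X)$ contributing a nonzero product are exactly the maximal ones, which biject with truth assignments. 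Since $\bigoplus\mathrm{SAT}$ is $\bigoplus$P-complete (the Cook reduction is parsimonious, hence parsimonious mod $k$), and the construction of $G$ is clearly polynomial-time and the reduction parsimonious mod $2$, this shows $\bigoplus\mathrm{INDEPENDENT\textrm{-}SET}$ is $\bigoplus$P-hard; membership in $\bigoplus$P is immediate.

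The main obstacle is designing the single fixed gadget $H$ so that both counting conditions hold \emph{simultaneously and independently of the rest of the graph} --- condition (ii) must hold for the count $|\mathcal I(G; H_i\setminus\{h_i\})|$ taken inside $G$, but by the structural hypothesis of Lemma~\ref{ISFormula} the edges out of $H_i$ only touch $h_i$, so this count really is an intrinsic invariant of $H\setminus\{h_i\}$ and I just need $|\mathcal I(H\setminus\{h_i\})|$ odd while $|\mathcal I(H)|$ even; a tiny case check (e.g. $H$ a path of suitable length with $h_i$ an endpoint, then add an isolated vertex if parity of $|\mathcal I(H)|$ is wrong) settles it. The only other point needing care is checking that every independent set of $G$ is accounted for with the right multiplicity when $I_0$ is \emph{not} maximal in $X$: there the product contains a factor $\min\{|I_0\cap N_G(H_i)|,1\}=0$ for some $i$, and Lemma~\ref{ISFormula} already guarantees the corresponding block $[I_0]$ has size $\equiv 0$, so such $I_0$ contribute nothing mod $2$ --- consistent with the fact that they do not correspond to assignments. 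I would close by remarking that this is the $k=2$ case of the more general Theorem~\ref{BiIndSet} proved in the sequel.
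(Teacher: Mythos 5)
Your overall strategy (reduce from $\bigoplus$SAT and analyse the independent sets of the constructed graph via Lemma \ref{ISFormula}, using gadget copies of a graph $H$ with $|\mathcal{I}(H)|\equiv 0$ and $|\mathcal{I}(H\setminus\{h\})|\equiv 1 \pmod 2$) is the paper's, and with $H=K_1$ these are exactly the paper's numerical conditions. But there is a genuine gap: you attach gadget copies of $H$ only to clauses, and you assert that any $I_0\in\mathcal{I}(G;X)$ that is not maximal in $X$ (one that misses both literal vertices of some variable) necessarily picks up a factor $\min\{|I_0\cap N_G(H_i)|,1\}=0$. That is false: a non-maximal $I_0$ can still meet the neighbourhood of every clause gadget. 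Take $F=(x_1)\wedge(x_1\vee x_2)$. The selection $I_0=\{x_1^{+}\}$ hits both clause neighbourhoods, so your sum counts the three sets $\{x_1^{+}\}$, $\{x_1^{+},x_2^{+}\}$, $\{x_1^{+},x_2^{-}\}$ and is $\equiv 1 \pmod 2$, whereas $F$ has two satisfying assignments, parity $0$. In general your sum counts \emph{partial} assignments whose true literals hit every clause, which need not agree modulo $2$ with the number of total satisfying assignments. The paper closes exactly this hole with a per-variable gadget: a vertex $p_i$ (itself a copy of $H=K_1$) adjacent to both $v_i$ and $\bar{v_i}$, so that any $I_0$ missing both literal vertices of $x_i$ gets a zero factor from the $p_i$ term, and its block of independent sets has even size by Lemma \ref{ISFormula}. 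You need this gadget, or something equivalent that forces the surviving $I_0$ to encode total assignments.

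A secondary slip: your final decision to join $h_i$ to the \emph{negations} of the literals of $C_i$ inverts the semantics. The product term in Lemma \ref{ISFormula} equals $1$ precisely when $I_0$ meets $N_G(H_i)$, so $N_G(H_i)\cap X$ must consist of the vertices representing the literals of $C_i$ themselves (as in your first formulation, and as in the paper); wiring to the negated literals would make the term test ``some literal of $C_i$ is false'' rather than clause satisfaction.
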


\begin{proof}

We precede by reduction from $\bigoplus$SAT. Given a CNF formula $F$ with clauses $\{C_1,\ldots C_m\}$ and variables $\{x_1,\ldots x_n\}$, considered as an instance of $\bigoplus$SAT, we construct a graph, $G$, with vertices $\{v_i, \bar{v_i}, p_i \mid i \in \{1,\ldots,n\}\}$, corresponding to each variable in $F$. There are also vertices $\{c_j \mid j = 1,\ldots,m\}$, each corresponding to one clause in $F$. There are three types of edges in the graph. Each pair $(v_i,\bar{v_i})$ is linked by an edge, and each vertex $p_i$ is linked by an edge to both $v_i$ and $\bar{v_i}$. Finally, a vertex $v_i$ ($\bar{v_i}$) is linked to a vertex $c_j$ if and only if the literal $x_i$ ($\bar{x_i}$) appears in the clause $C_j$. An example of the graph derived from the SAT formula with the single clause $x_1 \vee \bar{x_2}$ is given in figure \ref{xORybar}. We claim that the parity of the number of independent sets in $G$ is equal to the parity of the number of satisfying assignments of $F$.

This graph satisfies the conditions of lemma \ref{ISFormula}. The special subgraph $H$ is the graph on one vertex, which has $2 \equiv 0 \pmod 2$ independent sets as required. The $p_i$ and $c_j$ are the copies of $H$ and the set $X$ is the vertices $v_i,\bar{v_i}, i \in \{1...n\}$. It therefore suffices for us to show that the independent sets, $I$, of $G$ which satisfy $I\cap N_G(p_i) \neq \emptyset$ and $I\cap N_G(c_j) \neq \emptyset$ for all $i$ and $j$ are in one-to-one correspondence with the satisfying assignments of $F$.

We note that an independent set, $I$, with the required property must contain exactly one of $v_i$ and $\bar{v_i}$ for each $i$. It must contain at least one in order to ensure that $p_i$ has a neighbour in $I$, and it cannot contain more than one as $(v_i,\bar{v_i})\in E(G)$. We now consider the assignment of truth values to variables in $F$ given by setting $s(x_i)$ to true if $v_i\in I$ and setting it to false if $\bar{v_i} \in I$. To see that this assignment is satisfying, let $C_j$ be a clause in $F$, then the vertex $c_j$ has some neighbour in $I$, which is either $v_i$ or $\bar{v_i}$ for some $i$, and the literal $x_i$ or $\bar{x_i}$, which appears in $C_j$, is set to true by the construction of $s$.

Now using the formula in lemma \ref{ISFormula}, we see that the number of independent sets of $G$ modulo 2 is equal to the number of satisfying assignments of $F$ modulo 2, giving the desired reduction.

\begin{figure}[h] \label{xORybar}
\begin{center}
\begin{tikzpicture}

\path (1,2.5) node[above] (v1lab) {$v_1$};
\path (1,2.5) coordinate (v1);
\draw [fill] (v1) circle (2pt);

\path (4,2.5) node[above] (v1barlab) {$\bar{v_1}$};
\path (4,2.5) coordinate (v1bar);
\draw [fill] (v1bar) circle (2pt);

\path (2.5,1.5) node[below] (p1lab) {$p_1$};
\path (2.5,1.5) coordinate (p1);
\draw [fill] (p1) circle (2pt);

\path (5,4) node[above] (C1lab) {$C_1$};
\path (5,4) coordinate (C1);
\draw [fill] (C1) circle (2pt);

\path (9,2.5) node[above] (v2barlab) {$\bar{v_2}$};
\path (9,2.5) coordinate (v2bar);
\draw [fill] (v2bar) circle (2pt);

\path (6,2.5) node[above] (v2lab) {$v_2$};
\path (6,2.5) coordinate (v2);
\draw [fill] (v2) circle (2pt);

\path (7.5,1.5) node[below] (p2lab) {$p_2$};
\path (7.5,1.5) coordinate (p2);
\draw [fill] (p2) circle (2pt);

\draw (v1) -- (v1bar);
\draw (v1) -- (p1);
\draw (v1bar) -- (p1);
\draw (v1) -- (C1);
\draw (C1) -- (v2bar);

\draw (v2) -- (v2bar);
\draw (v2) -- (p2);
\draw (v2bar) -- (p2);
\end{tikzpicture}
\end{center}
\caption{Graph derived from formula $x_1\vee \bar{x_2}$}
\end{figure}
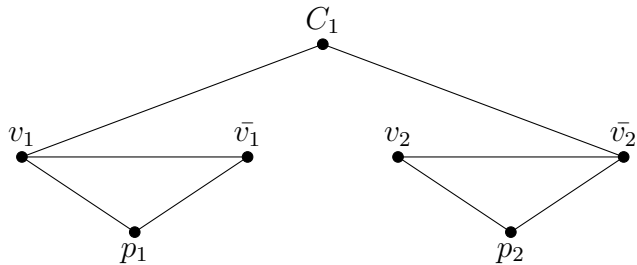
\end{proof}

\begin{thm}\label{IScountk}{\#$_k$-INDEPENDENT-SET is \#$_k$-complete for all $k$}\end{thm}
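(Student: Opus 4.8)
The plan is to imitate the proof of Theorem~\ref{ISparity} almost verbatim, reducing from \#$_k$SAT (which is \#$_k$-complete because the reduction in Cook's Theorem can be made parsimonious). The only ingredient of that proof special to $k=2$ is the choice of the graph $H$ appearing in Lemma~\ref{ISFormula}: there $H$ is a single vertex, which works only because a vertex has $2\equiv 0\pmod 2$ independent sets. So the first task is to replace it by a graph $H$, depending on $k$, with a distinguished vertex $h$ such that $|\mathcal{I}(H)|\equiv 0\pmod k$; to make the eventual reduction parsimonious modulo $k$ I would also want $|\mathcal{I}(H\backslash\{h\})|\equiv 1\pmod k$, so that every product term in the formula of Lemma~\ref{ISFormula} is $\equiv 1$.

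For the gadget I would take $H$ to be a clique $K_k$ on vertices $\{w_1,\ldots,w_k\}$ together with one extra vertex $h$ joined to $w_1$ and $w_2$ only. An independent set of $H$ either avoids $h$, and is then an independent set of $K_k$ (there are $k+1$ of these), or contains $h$, and then avoids $w_1,w_2$ and restricts to an independent set of the clique on $\{w_3,\ldots,w_k\}$ (there are $k-1$ of these). Hence $|\mathcal{I}(H)|=(k+1)+(k-1)=2k\equiv 0\pmod k$, while $|\mathcal{I}(H\backslash\{h\})|=|\mathcal{I}(K_k)|=k+1\equiv 1\pmod k$, as required. (For $k=2$ this $H$ is the triangle, so the construction specialises to essentially the one in Theorem~\ref{ISparity}.)

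Given a CNF formula $F$ with variables $x_1,\ldots,x_n$ and clauses $C_1,\ldots,C_m$, I would then build $G$ as in Theorem~\ref{ISparity} but with each single vertex $p_i$ and $c_j$ replaced by a fresh copy of $H$ whose distinguished vertex plays the role of $p_i$, respectively $c_j$: the set $X$ consists of the literal vertices $v_i,\bar{v_i}$ together with the edges $v_i\bar{v_i}$; the vertex $p_i$ is joined to $v_i$ and $\bar{v_i}$; and the vertex $c_j$ is joined to every literal vertex occurring in $C_j$. All edges of $G$ then lie inside one copy of $H$, inside $X$, or between a distinguished vertex and $X$, and $|\mathcal{I}(H)|\equiv 0\pmod k$, so Lemma~\ref{ISFormula} applies: $|\mathcal{I}(G)|$ is congruent mod $k$ to a sum over $I_0\in\mathcal{I}(G;X)$ whose summand is $0$ unless $I_0$ meets the neighbourhood of every copy of $H$, in which case, since each copy of $H$ is attached to the rest of $G$ only through its distinguished vertex, the summand equals $|\mathcal{I}(H\backslash\{h\})|^{\,n+m}\equiv 1\pmod k$. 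Exactly as in Theorem~\ref{ISparity}, such an $I_0$ contains precisely one of $v_i,\bar{v_i}$ for each $i$, hence encodes an assignment $s$ (with $s(x_i)$ true iff $v_i\in I_0$), and meeting the neighbourhood of the copy of $H$ attached at $c_j$ says exactly that $s$ satisfies $C_j$; this gives a bijection onto the satisfying assignments of $F$. Therefore $|\mathcal{I}(G)|\equiv\#\mathrm{SAT}(F)\pmod k$, and since \#$_k$INDEPENDENT-SET is clearly in \#$_k$P, this proves it is \#$_k$-complete.

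The one place where there is anything substantive to do is producing the gadget $H$; after that the argument is essentially a line-for-line copy of Theorem~\ref{ISparity}, so that is where I would expect the work to be. I would also note that the exact congruence $|\mathcal{I}(H\backslash\{h\})|\equiv 1$ is not essential: if this quantity were merely coprime to $k$, then $|\mathcal{I}(G)|$ would be a fixed power of a unit times $\#\mathrm{SAT}(F)$ modulo the constant $k$, and $\#\mathrm{SAT}(F)\bmod k$ could still be recovered by a polynomial-time Turing reduction; but the clique gadget above gives the cleaner parsimonious-modulo-$k$ reduction directly, so I would present that.
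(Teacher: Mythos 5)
Your proposal is correct, but it takes a different route from the paper. The paper disposes of Theorem~\ref{IScountk} in one line, observing that it is an immediate consequence of Theorem~\ref{BiIndSet} (a bipartite graph is in particular a graph, so hardness of \#$_k$BIPARTITE-INDEPENDENT-SET transfers directly); it only gestures, in a parenthetical remark, at the explicit construction ``along the lines of that given above (the special subgraphs being copies of $K_p$)'' without carrying it out. You actually carry out that explicit construction: a reduction from \#$_k$SAT via Lemma~\ref{ISFormula}, with the single-vertex gadget of Theorem~\ref{ISparity} replaced by your clique-plus-pendant gadget $H$ ($K_k$ with an extra vertex $h$ joined to two clique vertices), and your arithmetic checks out: $|\mathcal{I}(H)|=2k\equiv 0$ and $|\mathcal{I}(H\backslash\{h\})|=k+1\equiv 1\pmod k$, the graph satisfies the structural hypotheses of Lemma~\ref{ISFormula}, and the surviving terms of the formula are in bijection with satisfying assignments, each contributing $1$ modulo $k$. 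A notable advantage of your gadget is that it works uniformly for \emph{every} integer $k$, so you need neither the reduction to prime moduli nor the Fermat-little-theorem gadget of Lemma~\ref{Hexists}, and no division by units; the price is that your gadget contains cliques and so, unlike the paper's $K_{p-2,p-2}$ construction, cannot be reused for the bipartite case, which is the case the paper actually needs for the \#$_k$SAT$(\mathrm{OR}_1)$ reduction and which is why the paper proves Theorem~\ref{BiIndSet} in full and gets Theorem~\ref{IScountk} for free. (One small quibble: for $k=2$ your gadget is a triangle, not the single vertex used in Theorem~\ref{ISparity}, so it does not literally specialise to that construction, though both have an even number of independent sets and the argument is the same.)
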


\begin{proof}
Whilst it is possible to give a construction along the lines of that given above (the special subgraphs being copies of $K_p$), this theorem is again an immediate consequence of theorem \ref{BiIndSet}, so this time we will not detail the construction explicitly.
\end{proof}

As noted above, the problem of counting independent sets in a bipartite graph is parsimoniously reducible to the problem of counting the number of satisfying assignments of an $OR_1$ formula. In order to use $OR_1$ as a base problem for our counting reduction in the next section, we prove theorem \ref{BiIndSet}.

\begin{thm} \label{BiIndSet}
\#$_k$BIPARTITE-INDEPENDENT-SET is \#$_k$-complete for all $k$
\end{thm}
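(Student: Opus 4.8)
The plan is to give a parsimonious-modulo-$k$ reduction from \#$_k$SAT to \#$_k$BIPARTITE-INDEPENDENT-SET, following exactly the recipe of Theorem~\ref{ISparity} but replacing every non-bipartite gadget. Given a CNF formula $F$ with variables $x_1,\dots,x_n$ and clauses $C_1,\dots,C_m$, I would build a bipartite graph $G$ meeting the hypotheses of Lemma~\ref{ISFormula}. The ``special subgraph'' $H$ will be the path $P_{\pi(k)-2}$ on $\pi(k)-2$ vertices, where $\pi(k)$ denotes the Pisano period of $k$, with distinguished vertex $h$ chosen to be one of its endpoints. Since the number of independent sets in $P_\ell$ is the Fibonacci number $F_{\ell+2}$ (with $F_1=F_2=1$), this gives $|\mathcal{I}(H)|=F_{\pi(k)}\equiv 0\pmod k$, as Lemma~\ref{ISFormula} demands, and also $|\mathcal{I}(H\setminus\{h\})|=F_{\pi(k)-1}\equiv 1\pmod k$; being a path, $H$ is bipartite. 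One checks $\pi(k)\geq 3$ for all $k\geq 2$ (for $k=2$ the path is a single vertex, recovering the gadget of Theorem~\ref{ISparity}), and $k=1$ is trivial.

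The core $X$ consists, for each variable $x_i$, of a literal pair $v_i,\bar v_i$, two auxiliary vertices $f_i,g_i$, and the two edges $v_if_i$ and $\bar v_ig_i$, so that $G[X]$ is a disjoint union of $n$ copies of a two-edge matching. To this I attach copies of $H$: one whose distinguished vertex is joined to $v_i$ and $\bar v_i$, one whose distinguished vertex is joined to $f_i$ and $g_i$, and, for each clause $C_j$, one whose distinguished vertex is joined to exactly the literal vertices occurring in $C_j$ (to $v_a$ when $x_a\in C_j$, to $\bar v_a$ when $\bar x_a\in C_j$). Placing every $v_i,\bar v_i$ on one side of the bipartition and every $f_i,g_i$ on the other, and $2$-colouring each attached path so that its endpoint lies opposite the (monochromatic) set of core vertices it meets, shows $G$ is bipartite. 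By Lemma~\ref{ISFormula}, and since each factor $|\mathcal{I}(H\setminus\{h\})|$ is $\equiv 1\pmod k$, $|\mathcal{I}(G)|$ is congruent modulo $k$ to the number of $I_0\in\mathcal{I}(G;X)$ that meet $N_G(\cdot)$ of every attached copy of $H$.

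It then remains to identify those surviving $I_0$ with the satisfying assignments of $F$. The copy attached at $\{v_i,\bar v_i\}$ forces $I_0\cap\{v_i,\bar v_i\}\neq\emptyset$ and the copy attached at $\{f_i,g_i\}$ forces $I_0\cap\{f_i,g_i\}\neq\emptyset$; a short case analysis of the nine independent sets of the two-edge matching, using the edges $v_if_i$ and $\bar v_ig_i$, shows that the only ones satisfying both conditions are $\{v_i,g_i\}$ and $\{f_i,\bar v_i\}$, so a surviving $I_0$ contains exactly one of $v_i,\bar v_i$ for each $i$. Reading $x_i$ as true when $v_i\in I_0$ and false when $\bar v_i\in I_0$, the clause copy at $C_j$ forces $I_0$ to contain some literal vertex of $C_j$, i.e.\ the assignment satisfies $C_j$; hence surviving independent sets correspond bijectively to satisfying assignments, the reduction is parsimonious modulo $k$, and combined with Lemma~\ref{Mon2SATisIS}-style composition this completes the proof. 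The main obstacle is precisely this middle step: the triangle $v_i\bar v_ip_i$ of Theorem~\ref{ISparity} is the natural device pinning a variable to a single truth value but is irredeemably non-bipartite, and one cannot simply keep the edge $v_i\bar v_i$ because the clause gadgets require all literal vertices on one side; the $f_i,g_i$ ``detour'', which trades the exclusion edge for a second ``at least one'' constraint, is what reconciles exclusivity with two-colourability, and verifying that it genuinely forces exclusivity — neither under- nor over-counting — together with checking bipartiteness and the Fibonacci congruences for $H$, is where the real work lies.
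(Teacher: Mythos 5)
Your construction is correct, and its skeleton is essentially the paper's: your $f_i,g_i$ play exactly the role of the paper's $p_i,\bar{p_i}$, the copy of $H$ attached at $\{v_i,\bar{v_i}\}$ is the paper's $H_i^*$, the copy attached at $\{f_i,g_i\}$ is $H_i$, the clause copies are the same, and Lemma \ref{ISFormula} does the counting in both cases. Where you genuinely depart from the paper is in the choice of the auxiliary graph $H$ and the arithmetic that follows. The paper first restricts to prime moduli $p$ (arguing that hardness modulo a prime factor suffices), takes $H=K_{p-2,p-2}$ via Lemma \ref{Hexists} and Fermat's little theorem, gets $|\mathcal{I}(H\backslash\{h\})|\not\equiv 0 \pmod p$, and therefore must divide by this nonzero constant, so its reduction is a Turing reduction valid only for prime $p$. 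Your Pisano-period path $P_{\pi(k)-2}$ instead gives $|\mathcal{I}(H)|=F_{\pi(k)}\equiv 0\pmod k$ and $|\mathcal{I}(H\backslash\{h\})|=F_{\pi(k)-1}=F_{\pi(k)+1}-F_{\pi(k)}\equiv 1\pmod k$ for every $k\geq 2$ (and indeed $\pi(k)\geq 3$, with $k=2$ recovering the single-vertex gadget), so the surviving factor is exactly $1$: no division is needed, the reduction is parsimonious modulo $k$, and completeness for all $k$ follows in one stroke, bypassing both Lemma \ref{Hexists} and the reduction to prime factors. The paper's version buys a small explicit gadget per prime and reuses the prime-factor bookkeeping it needs elsewhere; yours buys uniformity in $k$ and a stronger (parsimonious rather than Turing) reduction, at the price of the Fibonacci/Pisano facts and a gadget of size $\pi(k)\leq 6k$, which is constant for fixed $k$. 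Your remaining details check out: the nine-case analysis of the two-edge matching does force exactly one of $v_i,\bar{v_i}$ per variable, the bipartition with literal vertices on one side and $f_i,g_i$ on the other extends through the attached paths, and the bijection with satisfying assignments is as you describe; the only slight slip is invoking ``Lemma \ref{Mon2SATisIS}-style composition'' at the end, where what you actually need is simply the \#$_k$P-completeness of \#$_k$SAT via the parsimonious Cook reduction, which the paper already records.
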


\begin{proof}
We begin by noting that it actually suffices to show that the problem of counting modulo $p$ is \#$_p$-complete for all prime $p$, since counting modulo $k$ for any composite number $k$ is at least as hard as counting modulo any of the prime factors of $k$.

We proceed by reduction from \#$_p$-SAT. Given a SAT formula $F$ with clauses $\{C_1,\ldots C_m\}$ and variables $\{x_1,\ldots x_n\}$ and a prime number $p$, we construct a graph as described below.

With each variable $x_i$ in $F$, we associate a subgraph of $G$ as follows, the subgraph contains special vertices $v_i$, $\bar{v_i}$ the presence or absence of which in an independent set will correspond to the truth or otherwise of the literals $x_i$,$\bar(x_i)$ of $F$, there are also vertices $p_i$ and $\bar{p_i}$ - these are connected to $v_i$ and $\bar{v_i}$ respectively, and are both connected by an edge to one vertex, $h_i$ in $H_i$. Where $H_i$ is a copy of $H$, a bipartite graph with a distinguished vertex $h$, having the property that the number of independent sets in $H$ is a multiple of $p$ and that the number of independent sets in $H\backslash\{h\}$ is non-zero modulo $p$. There is also another copy of the same graph, $H_i^*$, one vertex of which, $h_i^*$ is linked by an edge to each of $v_i$ and $\bar{v_i}$. Finally for each clause $C_j$ in $F$ we add another copy of this bipartite graph $H$, denoted $C_j$, one vertex of which, $c_j$ is linked to each of the vertices representing the literals present in the clause $C_j$.

Formally then, the vertex set of $G$ will be $\{v_i,\bar{v_i},p_i,\bar{p_i}\mid i = 1 \ldots n\}$. Along with $\{V(H_i),V(H_i)^* \mid i = 1\ldots n\}$ and $\{V(C_j) \mid j = 1 \ldots m\}$ copies of $H$. The edge set will be the edges of $H_i$, $H_i^*$ and $C_j$ along with the edges $\{(v_i,p_i),(\bar{v_i},\bar{p_i}),(p_i,h_i),(\bar{p_i},h_i),(v_i,h_i^*),(\bar{v_i},h_i) \mid i = 1 \ldots n\}$ and the edges $(v_i,c_j)$, $(\bar{v_i},c_j)$ such that the literals $x_i$, $\bar{x_i}$ respectively appear in the clause $c_j$.

An example of the subgraph associated with a variable $x_i$ lying in the clause $C_j$ is given in figure \ref{Hfig}.

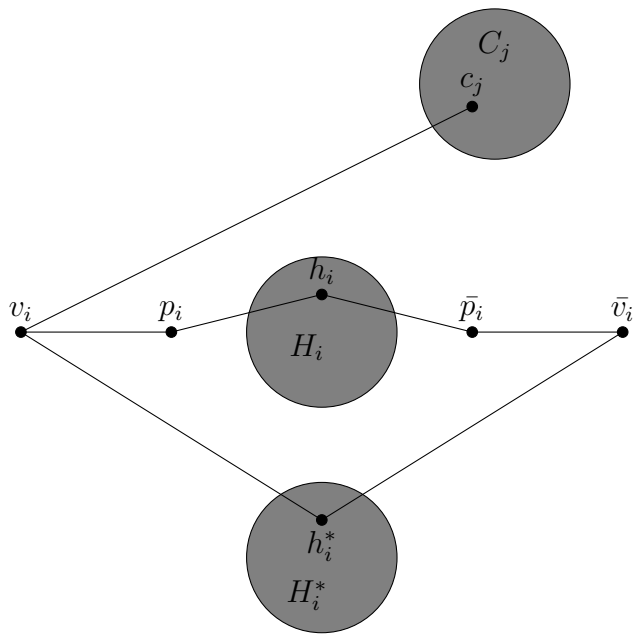
\begin{figure}[h] \label{Hfig}
\begin{center}
\begin{tikzpicture}

\draw [fill = gray] (5,4) ellipse (1 and 1);
\path (4.8,3.8) node (Hi) {$H_i$};
\draw [fill = gray] (7.3,7.3) ellipse (1 and 1);
\path (7.3,7.8) node (Cj) {$C_j$};
\draw [fill = gray] (5,1) ellipse (1 and 1);
\path (4.8,0.5) node (Histar) {$H_i^*$};

\path (1,4) node[above] (vilab) {$v_i$};
\path (1,4) coordinate (vi);
\draw [fill] (vi) circle (2pt);

\path (9,4) node[above] (vibarlab) {$\bar{v_i}$};
\path (9,4) coordinate (vibar);
\draw [fill] (vibar) circle (2pt);

\path (3,4) node[above] (pilab) {$p_i$};
\path (3,4) coordinate (pi);
\draw [fill] (pi) circle (2pt);

\path (7,4)   node[above] (pibarlab) {$\bar{p_i}$};
\path (7,4) coordinate (pibar);
\draw [fill] (pibar) circle (2pt);

\path (5,4.5) node[above] (hilab) {$h_i$};
\path (5,4.5) coordinate (hi);
\draw [fill] (hi) circle (2pt);

\path (5,1.5) node[below] (histarlab) {$h_i^*$};
\path (5,1.5) coordinate (histar);
\draw [fill] (histar) circle (2pt);

\path (7,7) node[above] (cjlab) {$c_j$};
\path (7,7) coordinate (cj);
\draw [fill] (cj) circle (2pt);
\draw (cj) -- (vi) -- (pi) -- (hi) -- (pibar) -- (vibar) -- (histar) -- (vi);

\end{tikzpicture}
\end{center}
\caption{Subgraph associated with the variable $x_i$}
\end{figure}

Using lemma \ref{ISFormula} it suffices to show that the independent sets, $I$, of $G$ which satisfy $I\cap N(H_i) \not = \emptyset$ for all $i$ are in one-to-one correspondence with the satisfying assignments of $F$, and that we can produce a bipartite graph $H$ with the desired property. Since we are able to divide by any non-zero constant modulo $p$, the formula given in the lemma will then allow us to derive the number of satisfying assignments of $F$ modulo $p$ from the number of independent sets of $G$ modulo $p$, giving the required reduction. That such $H$ can be constructed for all prime $p$ is shown in lemma \ref{Hexists} below.

Let $I$ be an independent set in $G$ with the relevant property. Then for all $i$, either $\{v_i,\bar{p_i}\}\subset I$ or $\{\bar{v_i},p_i\} \subset I$. To see this, we note that both $h_i$ and $h_i^*$ have some neighbour in $I$ by assumption, but then the only neighbours of $h_i$ in $X$ are $v_i$ and $\bar{v_i}$, so one of these two must be in $I$. Similarly, the neighbours of $h_i^*$ in $G$ are $p_i$ and $\bar{p_i}$ - so one of this pair must be in $I$, but then since $I$ is independent and $(v_i,p_i),(\bar{v_i},\bar{p_i})\in E(G)$ we have the stated result. Let $s$ be the assignment of truth values to variables in $F$ given by $s(x_i) = \textrm{true} \iff v_i \in I$. We claim that this is a satisfying assignment of $F$.

Indeed, let $C_j$ be a clause of $F$. Then there is some element of $I$ which is a neighbour of $c_j$, the distinguished node in $C_j$. This is either $v_i$ or $\bar{v_i}$ for some $i$, but then the literal $x_i$ ($\bar{x_i}$) appears in the clause $C_j$, and this literal is true by construction of $s$, therefore the clause $C_j$ is satisfied.

Similarly, if $s$ is a satisfying assignment of $F$, then the independent set constructed analogously to that above (with $\{v_i,\bar{p_i}\}\subset I$ if $s(x_i) =$ true and $\{\bar{v_i},p_i\} \subset I$ otherwise) is an independent set of $X$ with the required property.
\end{proof}

\begin{lem}\label{Hexists}
For all prime $p$ it is possible to construct a bipartite graph $H$, containing a distinguished node $h$, with the following properties.
\begin{itemize}
          \item[(i)] The number of independent sets in $H$ is congruent to zero modulo $p$.
          \item[(ii)] The number of independent sets in $H\backslash \{h\}$ is not congruent to zero modulo $p$.
\end{itemize}
Furthermore, for $p>2$, the graph $K_{(p-2),(p-2)}$ is such an $H$ (any node of the graph can be chosen as the distinguished node, since they are indistinguishable).
\end{lem}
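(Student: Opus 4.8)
The plan is to dispatch $p=2$ by hand and handle every odd prime uniformly with the complete bipartite graph named in the statement. For $p=2$ I would take $H$ to be a single vertex $h$ with no edges: then $\mathcal{I}(H)=\{\emptyset,\{h\}\}$ has size $2\equiv 0\pmod 2$, while $H\backslash\{h\}$ is the empty graph with the one independent set $\emptyset$, so condition (ii) holds. (This is precisely the subgraph already used in Theorem~\ref{ISparity}.)

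For $p>2$, set $a=p-2\geq 1$ and $H=K_{a,a}$ with colour classes $A,B$ of size $a$; it is bipartite by construction and, by symmetry, any vertex may be taken as $h$. The one combinatorial fact needed is that an independent set of a complete bipartite graph lies entirely in one side, so $|\mathcal{I}(K_{a,a})|=2^a+2^a-1=2^{a+1}-1$, and deleting $h$ (say from $A$) leaves $K_{a-1,a}$ with $|\mathcal{I}(K_{a-1,a})|=2^{a-1}+2^a-1=3\cdot 2^{a-1}-1$. Substituting $a=p-2$ gives $|\mathcal{I}(H)|=2^{p-1}-1$ and $|\mathcal{I}(H\backslash\{h\})|=3\cdot 2^{p-3}-1$.

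The key step is then Fermat's little theorem: $2^{p-1}\equiv 1\pmod p$ immediately yields $|\mathcal{I}(H)|\equiv 0$, which is (i); and since $p$ is odd, $4$ is invertible modulo $p$, so $2^{p-3}\equiv 4^{-1}$ and $|\mathcal{I}(H\backslash\{h\})|\equiv 3\cdot 4^{-1}-1\equiv -4^{-1}\not\equiv 0\pmod p$, which is (ii).

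I do not expect a genuine obstacle here. The only points requiring care are the boundary behaviour — the formula for $K_{a-1,a}$ must still be read correctly when $a=1$ (a single vertex, $2$ independent sets), and it is worth verifying $p=3$ directly — together with the observation that $K_{p-2,p-2}$ degenerates at $p=2$, which is exactly why that case is treated separately. One could replace $K_{p-2,p-2}$ by any bipartite gadget realising the value $2^{p-1}-1$, but the complete bipartite graph is the most economical choice and matches the statement.
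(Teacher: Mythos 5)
Your proposal is correct and follows essentially the same route as the paper: $K_1$ for $p=2$, and for odd $p$ the graph $K_{p-2,p-2}$ with $2^{p-1}-1$ independent sets, handled via Fermat's little theorem, with the count after deleting $h$ (the paper writes it as $(2^{p-1}-1)-2^{p-3}\equiv -2^{p-3}$, you as $3\cdot 2^{p-3}-1\equiv -4^{-1}$, which is the same conclusion). No gaps.
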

\begin{ex} \label{H2}
An example of a subgraph H which would satisfy the above conditions for $p = 2$ is the graph on one vertex, where the distinguished vertex, $h$ will clearly be the unique vertex in the graph. This graph has precisely 2 independent sets ($\emptyset$ and $\{h\}$), whereas $H\backslash \{h\} = \emptyset$ has precisely one.
\end{ex}
\begin{proof} [Proof of Lemma \ref{Hexists}]
We note that the graph $K_1$ provides an example of such a graph for $p=2$ (as explained in example \ref{H2}), and therefore restrict our attention to the case $p>2$.

Consider the graph $K_{n,n}$, the complete bipartite with two classes of $n$ vertices each. This graph has $2^{(n+1)} -1$ independent sets. To see this, simply note that any independent set in $K_{n,n}$ is contained entirely in one of the two vertex classes, and that every subset of one of the vertex classes in independent. Then there are $2^n$ independent sets in each class, but the empty set is in both, so there are in fact $2^{n+1}-1$ independent sets in $K_{n,n}$.

Now let $n=p-2$, then $K_{n,n}$ has $2^{p-1} - 1$ independent sets. But by Fermat's little theorem, $2^{p-1} \equiv 1 \pmod p$, therefore the number of independent sets of $K_{p-2,p-2}$ is congruent to zero modulo $p$.

Finally, the number of independent sets in $K_{p-2,p-3}$ (which is $H$ with a vertex deleted) is equal to $2^{p-2}+2^{p-3} -1$, but this is just $(2^{p-1} - 1) - 2^{p-3}$, and since $2 \not \equiv 0 \pmod p$, we have that $2^{p-3} \not \equiv 0 \pmod p$, and so $(2^{p-1} - 1) - 2^{p-3} \equiv - 2^{p-3} \not \equiv 0 \pmod p$. \end{proof}

\begin{proof}[Proof of theorem \ref{ORComplexity}]
This follows immediately from the theorems \ref{IScountk} and \ref{BiIndSet} along with lemmas \ref{Mon2SATisIS} and \ref{OR1BiIS}.
\end{proof}

\section{The classes \#$_k$-SAT} \label{SectionSAT}
We now know that  \#$_k$SAT$(OR_0)$, \#$_k$SAT$(OR_1)$ and \#$_k$SAT$(OR_2)$ are \#$_k$P-complete for all integer $k$. We proceed to give reductions from these base problems to Generalised Satisfiability Problems - the reductions are in most cases identical to those used by Creignou et. al. in \cite{CreKha01}.

We will make use of the functions T and F which are the functions of one variable which evaluate to true and false respectively, as well as \textbf{$XOR(x,y)$}, the function which evaluates to true when exactly one of $\textbf{x}$ and $\textbf{y}$ is true and false otherwise. We will say that a constraint set $\mathcal{F}$ is \textbf{C-closed} if every constraint function $f$ in $\mathcal{F}$ is such that if $f(\textbf{x})$ is true then $f(\textbf{1-x})$ is also true, in other words, such that the set of satisfying assignments of an $\mathcal{F}$-constraint is closed under complement. We will say that a constraint set, $\mathcal{F}$, is \textbf{0-valid (1-valid)} if setting all of the variables in any $\mathcal{F}$-formula to 0 (1) results in the formula evaluating to true. Finally, a constraint set is \textbf{affine} if each of the constraints in the set can be expressed as a system of linear equations in GF2.

\begin{defn}\label{implDef}
A family of constraints, $\mathcal{F}$, over a set of variables $\textbf{x}$, $\textbf{y}$, faithfully implements a boolean function $f(x)$ iff there exists an $\mathcal{F}$-collection of constraints, $\mathcal{C}$ such that there is exactly one way to satisfy each constraint in $\mathcal{C}$ whenever $f(\textbf{x})$ evaluates to true, and no ways to satisfy them all whenever $f(\textbf{x})$ evaluates to false. The variables $\textbf{x}$ are called function variables, and the variables $\textbf{y}$ auxiliary variables.
\end{defn}

We note that for our purposes a slightly weaker definition of faithful implementation would suffice, with ``exactly one" replaced with ``exactly one modulo $k$". However, it turns out that the reductions we need are faithful in the original sense, and therefore we use this definition in order to be able to appeal directly to the results of \cite{CreKha01}.

\begin{ex}
The constraint family $\{OR_0,\mathrm{F}\}$ faithfully implements the function T$(\textbf{x})$ through the constraint applications $\{OR_0(\textbf{x},\textbf{y}),\bar{\textbf{y}}\}$, $\textbf{y}$ is an auxiliary variable.
\end{ex}

\begin{lem}\label{implementations}
Given an integer $k$ and a constraint set $\mathcal{F}$, if \#$_k$SAT($\mathcal{F}$) is \#$_k$P-hard and every constraint of $\mathcal{F}$ can be faithfully implemented by $\mathcal{F'}$, then \#$_k$SAT($\mathcal{F'}$) is also \#$_k$P-hard.
\end{lem}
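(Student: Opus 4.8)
The plan is to turn a faithful implementation into a parsimonious reduction. Since a parsimonious reduction is parsimonious modulo $k$ (as noted in the paper), and \#$_k$P-completeness is defined via polynomial-time Turing reducibility with a \#$_k$P oracle (Definition~\ref{DefnCompleteness}), this will immediately transfer \#$_k$P-hardness from \#$_k$SAT$(\mathcal{F})$ to \#$_k$SAT$(\mathcal{F}')$.

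First I would fix, for each constraint function $f \in \mathcal{F}$, a witnessing $\mathcal{F}'$-collection $\mathcal{C}_f$ on function variables $\mathbf{x}_f$ and auxiliary variables $\mathbf{y}_f$, as provided by Definition~\ref{implDef}. Given an arbitrary $\mathcal{F}$-formula $\phi$ on variables $\mathbf{z}$, written as a conjunction of constraint applications $f_1(\mathbf{z}^{(1)}) \wedge \cdots \wedge f_m(\mathbf{z}^{(m)})$, I produce an $\mathcal{F}'$-formula $\phi'$ by replacing, for each $\ell$, the application $f_\ell(\mathbf{z}^{(\ell)})$ with the constraints of $\mathcal{C}_{f_\ell}$, identifying its function variables with the tuple $\mathbf{z}^{(\ell)}$ and renaming its auxiliary variables to a fresh block $\mathbf{y}^{(\ell)}$ of new variables, chosen disjoint from $\mathbf{z}$ and from all other such blocks. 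Because each $\mathcal{C}_f$ has size bounded by a constant depending only on $\mathcal{F}$ and $\mathcal{F}'$, the map $\phi \mapsto \phi'$ is computable in polynomial time.

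The core step is a counting identity, structurally the same as the neighbourhood-product argument of Lemma~\ref{ISFormula}. Fix an assignment $\sigma$ to $\mathbf{z}$. Since the blocks $\mathbf{y}^{(1)}, \dots, \mathbf{y}^{(m)}$ are pairwise disjoint and the only variables shared between the sub-collections $\mathcal{C}_{f_\ell}$ come from $\mathbf{z}$, an extension of $\sigma$ to all variables of $\phi'$ satisfies $\phi'$ if and only if, for every $\ell$, its restriction to $\mathbf{y}^{(\ell)}$ satisfies $\mathcal{C}_{f_\ell}$ under $\mathbf{x}_{f_\ell} = \sigma(\mathbf{z}^{(\ell)})$. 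Hence the number of satisfying extensions of $\sigma$ equals $\prod_{\ell=1}^m N_\ell(\sigma)$, where $N_\ell(\sigma)$ is the number of assignments to $\mathbf{y}^{(\ell)}$ satisfying $\mathcal{C}_{f_\ell}$ with those function-variable values. By faithfulness, $N_\ell(\sigma) = 1$ if $f_\ell(\sigma(\mathbf{z}^{(\ell)}))$ is true and $N_\ell(\sigma) = 0$ if it is false, so the product is $1$ when $\sigma$ satisfies $\phi$ and $0$ otherwise. Therefore $\phi$ and $\phi'$ have exactly the same number of satisfying assignments, and a fortiori they agree modulo $k$; this gives \#$_k$SAT$(\mathcal{F}) \leq_{\#_k}$ \#$_k$SAT$(\mathcal{F}')$.

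Combining this \#$_k$-reduction with the assumed \#$_k$P-hardness of \#$_k$SAT$(\mathcal{F})$ and the transitivity of polynomial-time Turing reductions relative to a \#$_k$P oracle, every problem in \#$_k$P is polynomial-time Turing-reducible to \#$_k$SAT$(\mathcal{F}')$, which is the claim. The only genuine obstacle is bookkeeping rather than mathematics: one must ensure the auxiliary variables are truly fresh for each constraint application, since otherwise distinct sub-collections could interfere and the count would fail to factor as the clean product above. It is also worth recording explicitly that the implementations used in the sequel (and in \cite{CreKha01}) are faithful in the original ``exactly one'' sense, so nothing beyond this parsimony observation is needed to handle counting modulo $k$.
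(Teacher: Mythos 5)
Your proof is correct and follows essentially the same route as the paper: replace each constraint application by its faithful implementation with fresh auxiliary variables, observe that faithfulness gives a unique extension of each satisfying assignment (hence a parsimonious, and so mod-$k$ parsimonious, reduction), and transfer hardness. You simply spell out the product/fresh-variable bookkeeping that the paper leaves implicit by citing \cite{CreKha01}.
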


\begin{proof}
This proof is essentially identical to the proof of theorem 5.15 in \cite{CreKha01}. Given an $\mathcal{F}$-collection of constraint applications on a variable set $\textbf{x}$, say $\mathcal{C}$, we transform this using faithful implementations to an $\mathcal{F'}$-collection of constraint applications on a new variable set, $(\textbf{x},\textbf{y})$, say $\mathcal{C'}$. Since the implementations are faithful, each satisfying assignment of $\mathcal{C}$ can be extended in a unique way to a satisfying assignment of $\mathcal{C'}$. Therefore there is a one-to-one correspondence between satisfying assignments of $\mathcal{C}$ and satisfying assignments of $\mathcal{C'}$. This gives a parsimonious reduction from \#SAT($\mathcal{F}$) to \#SAT($\mathcal{F'}$), which clearly implies the desired result.
\end{proof}

We will make use of the following lemmas, taken from \cite{CreKha01} and stated here without proof.

\begin{lem}\label{not01val}
\cite{CreKha01} If a constraint family $\mathcal{F}$ is not 0-valid (1-valid) and
\begin{itemize}
 \item[(i)] if $\mathcal{F}$ is C-closed, then $\mathcal{F}$ faithfully implements XOR.
 \item[(ii)] if $\mathcal{F}$ is not C-closed, then $\mathcal{F}$ faithfully implements T (F).
\end{itemize}
\end{lem}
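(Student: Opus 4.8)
The plan is to obtain both parts by the standard implementation mechanism: from a single witnessing constraint, permute its arguments and identify variables to cut it down to a binary relation, and then intersect (conjoin) a couple of such relations. Two standing observations make this legitimate. First, permuting arguments, identifying variables and conjoining constraints on disjoint auxiliary variables all preserve faithfulness, and a relation (applied as one constraint with no auxiliary variables) faithfully implements itself; this is just the composability of faithful implementations used already in Lemma~\ref{implementations}. Second, I may assume every constraint of $\mathcal{F}$ is satisfiable, since otherwise $\#_k\mathrm{SAT}(\mathcal{F})$ is trivially polynomial-time and this lemma is never invoked. Finally I would prove only the ``not $0$-valid'' halves: complementing every variable in a family swaps $0$-validity with $1$-validity, preserves $C$-closure, fixes XOR, and swaps $T$ with $F$, so the parenthetical ``not $1$-valid'' statements follow at once.

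For part (i), fix $f \in \mathcal{F}$ with $f(\mathbf{0})$ false. Since $\mathcal{F}$ is $C$-closed, $f(\mathbf{1})$ is false as well (otherwise $\mathbf{1}$, and hence its complement $\mathbf{0}$, would satisfy $f$). Pick any satisfying assignment $\mathbf{b}$ of $f$; it is neither $\mathbf{0}$ nor $\mathbf{1}$, so it has both a $1$-coordinate and a $0$-coordinate. Identify all $1$-coordinates of $\mathbf{b}$ to a single variable $x$ and all $0$-coordinates to $y$; the resulting binary constraint $f'(x,y)$ has $f'(1,0)=f(\mathbf{b})=$ true, $f'(0,1)=f(\bar{\mathbf{b}})=$ true by $C$-closure, and $f'(0,0)=f(\mathbf{0})$, $f'(1,1)=f(\mathbf{1})$ both false. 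So the satisfying set of $f'$ is exactly $\{(0,1),(1,0)\}$, i.e.\ $f'$ is XOR, which it faithfully implements with no auxiliary variables.

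For part (ii), again fix $f\in\mathcal{F}$ with $f(\mathbf{0})$ false. If $f(\mathbf{1})$ is true, identifying all of $f$'s variables to one variable gives the unary relation $\{1\}=T$, and we are done. Otherwise $f(\mathbf{1})$ is false; pick any satisfying $\mathbf{b}$ (necessarily mixed) and identify as in part (i) to get $f'(x,y)$ with $f'(1,0)$ true and $f'(0,0)=f'(1,1)$ false, so the satisfying set of $f'$ is either $\{(1,0)\}$, which is $T(x)\wedge F(y)$ and implements $T(x)$ with $y$ auxiliary, or $\{(0,1),(1,0)\}=$ XOR. In the remaining case XOR is available and we must use that $\mathcal{F}$ is not $C$-closed: fix $g\in\mathcal{F}$ and a tuple $\mathbf{a}$ with $g(\mathbf{a})$ true, $g(\bar{\mathbf{a}})$ false. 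If $\mathbf{a}=\mathbf{1}$, full identification of $g$ gives $T$ directly; if $\mathbf{a}=\mathbf{0}$ it gives $F$, and then $\{\mathrm{XOR}(x,y),F(y)\}$ has satisfying set $\{(1,0)\}$ and faithfully implements $T(x)$ with $y$ auxiliary; and if $\mathbf{a}$ is mixed, identify its $1$- and $0$-coordinates to $x$ and $y$ to get $g'(x,y)$ with $(1,0)$ in and $(0,1)$ out of its satisfying set, so that $\{g'(x,y),\mathrm{XOR}(x,y)\}$ has satisfying set exactly $\{(1,0)\}$ and faithfully implements $T(x)$, with $y$ and the auxiliary variables of the XOR-gadget serving as auxiliary variables.

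What remains to discharge carefully is the faithfulness bookkeeping: that substituting the XOR-gadget built from $f$ into the conjunction with $g'$ (or with $F$) leaves every auxiliary variable uniquely determined on each satisfying assignment of the target, and that the degenerate cases in which a witnessing tuple is all-$0$ or all-$1$ are covered — but these are short finite checks. The one genuine obstacle is the XOR subcase of part (ii): from the ``not $0$-valid'' witness alone one cannot in general pin a variable to $1$ at all (for instance $\{\mathrm{XOR}\}$ is itself $C$-closed and faithfully implements neither constant), so the argument has to route through the combined gadget above, which is precisely where non-$C$-closure is spent. Should one prefer, this statement is one of the standard implementation lemmas of Schaefer~\cite{Sch78} and of Creignou--Khanna~\cite{CreKha01} and can simply be cited as given.
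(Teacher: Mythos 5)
The paper does not actually prove this lemma --- it is quoted from \cite{CreKha01} and explicitly ``stated here without proof'' --- so your proposal is a self-contained reconstruction rather than an alternative to an internal argument, and as such it is essentially correct: collapsing a witnessing tuple by identifying its $1$-coordinates and $0$-coordinates to two variables, and then conjoining with the XOR or constant so obtained, is exactly the Schaefer/Creignou--Khanna implementation technique, and your case analysis (including spending non-$C$-closure precisely in the subcase where the collapsed non-$0$-valid constraint turns out to be XOR, which your $\{\mathrm{XOR}\}$ example shows is unavoidable) is complete, with each gadget faithful as claimed. Two caveats are worth recording. First, your justification for assuming every constraint satisfiable is not right as stated: a family containing an unsatisfiable constraint alongside, say, $\mathrm{OR}_2$ does not have a trivially polynomial $\#_k\mathrm{SAT}$ problem. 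The honest position is that the lemma is literally false for such degenerate families (the family consisting of the empty relation and the always-true relation is non-$0$-valid and C-closed yet implements nothing), that \cite{CreKha01} works under the convention that constraints are satisfiable, and that this is harmless here because wherever the lemma is invoked in the proof of Theorem \ref{countCSPmodk} the relevant witnesses are satisfiable: the non-affine function $g$ is non-empty, and a witness to non-C-closure has a satisfying tuple by definition. Second, your gadgets identify variables within a single constraint application; this is allowed in the constraint-application formalism of \cite{CreKha01}, but since the present paper never defines $\mathcal{F}$-formulae at that level of precision, you should state the convention explicitly rather than fold it into the generic ``composability'' remark. With those conventions made explicit, your proof stands, and its only advantage over the paper's treatment is self-containedness.
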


\begin{lem}\label{withConst}
\cite{CreKha01} Take a function $f$. If $f$ is not affine, then $\{f,\mathrm{F,T}\}$ faithfully implements at least one of the three functions OR$_0$, OR$_1$ and OR$_2$. Furthermore, if $f$ is 0-valid (1-valid) then $\{f,\mathrm{F}\}$ ($\{f,\mathrm{T}\}$) faithfully implements one of OR$_1$ or OR$_2$ (OR$_0$ or OR$_1$).
\end{lem}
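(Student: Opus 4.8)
The plan is to derive this from the standard structural analysis of non-affine Boolean relations, identifying the constraint $f$ with its relation $R = f^{-1}(1) \subseteq \{0,1\}^n$. The starting point is the classical characterisation: $R$ is affine exactly when it is closed under the ternary operation $(\mathbf{x},\mathbf{y},\mathbf{z})\mapsto \mathbf{x}\oplus\mathbf{y}\oplus\mathbf{z}$ (coordinatewise XOR), since a coset $\mathbf{v}+L$ of a linear subspace over GF(2) is precisely a nonempty set closed under this operation. Hence if $f$ is not affine there exist $\mathbf{a},\mathbf{b},\mathbf{c}\in R$ with $\mathbf{d}:=\mathbf{a}\oplus\mathbf{b}\oplus\mathbf{c}\notin R$. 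In the $0$-valid case I would additionally take $\mathbf{c}=\mathbf{0}$: a $0$-valid affine relation is a subspace, so a $0$-valid non-affine $R$ is not closed under pairwise XOR and such $\mathbf{a},\mathbf{b}$ with $\mathbf{a}\oplus\mathbf{b}\notin R$ exist; symmetrically, in the $1$-valid case take $\mathbf{c}=\mathbf{1}$, using that a $1$-valid affine relation is a coset through $\mathbf{1}$.

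Next I would normalise $f$ to a relation of bounded arity, using only constants and variable identification, both of which yield faithful implementations in the sense of Definition \ref{implDef} (a pinned auxiliary variable is forced to a unique value, and identifying two function variables introduces no auxiliary variable at all). Partition the coordinates according to the value of the triple $(a_i,b_i,c_i)\in\{0,1\}^3$; pin every coordinate in a $(0,0,0)$-class to $0$ with $\mathrm{F}$, every coordinate in a $(1,1,1)$-class to $1$ with $\mathrm{T}$, and identify all variables within each of the at most six remaining classes (on which $\mathbf{a},\mathbf{b},\mathbf{c}$ are each constant, so they survive). The resulting relation $R'$ has arity at most $6$, still contains the restrictions of $\mathbf{a},\mathbf{b},\mathbf{c}$ but not that of $\mathbf{d}$, and so is not affine. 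When $f$ is $0$-valid, the choice $\mathbf{c}=\mathbf{0}$ confines every class to $\{(0,0,0),(1,0,0),(0,1,0),(1,1,0)\}$, so no coordinate is ever pinned to $1$, only $\mathrm{F}$ is used, and $R'$ has arity at most $3$; moreover, since setting every function and auxiliary variable to $0$ satisfies every $f$- and $\mathrm{F}$-constraint, any relation faithfully implemented from a $0$-valid $f$ by $\{f,\mathrm{F}\}$ is again $0$-valid. The $1$-valid case is symmetric, with $\mathrm{T}$ and $\mathbf{1}$.

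It then remains to carry out a finite case analysis: for each of the finitely many non-affine relations of arity at most $6$ (at most $3$ under the validity hypotheses), exhibit a faithful implementation of some OR$_i$. The main device is to choose two coordinates as the function variables and leave the others as auxiliary variables, after first pinning coordinates that vary freely and identifying linked ones; when the remaining coordinates are functionally determined on the satisfying part this is automatically faithful. For example, $\{000,011,101\}$ faithfully implements OR$_2=\bar{x}\vee\bar{y}$ just by making its last coordinate auxiliary, since over the satisfying pairs $\{00,01,10\}$ that coordinate is forced to equal $x\oplus y$. Pushed through systematically, each $R'$ either collapses to an affine relation (impossible, since $R'$ is not affine) or produces one of OR$_0$, OR$_1$, OR$_2$, which are, up to exchanging the two coordinates, exactly the three-element binary relations. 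Under the $0$-valid hypothesis the implemented OR is itself $0$-valid, hence OR$_1$ or OR$_2$; under the $1$-valid hypothesis it is $1$-valid, hence OR$_0$ or OR$_1$.

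The crux of the argument --- and the part I would take from \cite{CreKha01} rather than reprove, Schaefer's treatment in \cite{Sch78} being essentially the same --- is this last case analysis. Reducing the arity down to $2$ is not achieved by a single restriction: already for $\{000,011,101\}$ every one-step pinning or identification collapses to an affine binary relation, so one genuinely needs the auxiliary-variable machinery of faithful implementations, and in some cases several applications of $R'$ with shared variables, while keeping every step faithful (unique extensions). Bookkeeping all of these cases, and checking that the $0$-valid and $1$-valid sub-cases never require the constant they are not allowed, is the combinatorial heart of the lemma.
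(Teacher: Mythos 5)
The paper does not actually prove Lemma \ref{withConst}: it is imported verbatim from \cite{CreKha01} and explicitly ``stated here without proof,'' so there is no in-paper argument to measure you against. Your sketch reconstructs the standard route behind the cited result, and the parts you work out are sound: the characterisation of affine relations as those closed under coordinatewise $\mathbf{x}\oplus\mathbf{y}\oplus\mathbf{z}$ (with the $0$-valid/$1$-valid refinement via pairwise XOR through $\mathbf{0}$ or $\mathbf{1}$), the reduction to bounded arity by pinning constant coordinate classes with $\mathrm{F}$/$\mathrm{T}$ and identifying variables inside the other classes --- both of which are indeed faithful in the sense of Definition \ref{implDef} and compose faithfully --- and the observation that any relation faithfully implemented from $0$-valid (resp.\ $1$-valid) constraints is again $0$-valid (resp.\ $1$-valid), which is what forces the implemented function to be $\mathrm{OR}_1$ or $\mathrm{OR}_2$ (resp.\ $\mathrm{OR}_0$ or $\mathrm{OR}_1$); your worked example $\{000,011,101\}\Rightarrow\mathrm{OR}_2$ is also correct. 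The one caveat is that, as you candidly say, the combinatorial heart --- the finite case analysis showing every non-affine relation of small arity, with the permitted constants only, faithfully implements some $\mathrm{OR}_i$ --- is deferred to \cite{CreKha01} (equivalently \cite{Sch78}), so as a self-contained proof your text is a structured sketch rather than a complete argument; but since the paper itself leans on exactly the same reference for the entire lemma, this deferral is consistent with, and in fact more explicit than, the paper's treatment, and I see no gap relative to it.
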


\begin{lem}\label{nonCClosed}
Let $\mathcal{F}$ be a non-C-Closed family of functions. Then if $\mathcal{F}$ is both 0-valid and 1-valid, $\mathcal{F}$ faithfully implements OR$_1$.
\end{lem}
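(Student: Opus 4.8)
The plan is to combine Lemma \ref{withConst} with an argument that produces the constants $\mathrm{T}$ and $\mathrm{F}$ for free in the 0-valid and 1-valid setting, together with a way of forcing the output OR$_0$/OR$_2$ into OR$_1$ using complementation of variables. First I would observe that since $\mathcal{F}$ is 0-valid but \emph{not} C-closed, there is some constraint $f \in \mathcal{F}$ and some satisfying assignment $\textbf{a}$ of $f$ whose complement $\textbf{1}-\textbf{a}$ does not satisfy $f$; in particular $f$ is not the constant-true function, so $\mathcal{F}$ is a genuinely non-trivial family. The reason 1-validity is also assumed is to keep both constants available: from 0-validity we can substitute $0$ into any coordinate and preserve satisfiability of the all-zero extension, and dually for $1$.

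Second I would show that $\{f\}$, or a small $\mathcal{F}$-collection, faithfully implements at least one of the constants. Here I would run an argument in the spirit of Lemma \ref{not01val}(ii): take the constraint $f$ witnessing non-C-closedness, and by identifying and fixing variables (using 0-validity and 1-validity to know which fixings keep the constraint satisfiable) extract a unary constraint that is satisfied by exactly one value. A clean way to organise this: pick $\textbf{a}$ with $f(\textbf{a})$ true and $f(\textbf{1}-\textbf{a})$ false, let $P=\{i: a_i=1\}$ and identify all variables in $P$ to a single variable $u$ and all variables outside $P$ to a single variable $v$; the resulting binary constraint $g(u,v)$ satisfies $g(1,0)=\text{true}$, $g(0,1)=\text{false}$, and by 0-validity $g(0,0)=\text{true}$ and by 1-validity $g(1,1)=\text{true}$. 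Thus $g$ is (up to renaming) either the constraint $u$, or $u \vee v$-type, or the implication $v \to u$; in the first case we already have $\mathrm{T}$ as $g(u,\,\cdot\,)$ with $v$ auxiliary, and in the remaining cases substituting a further fixed variable — or iterating the identification — yields $\mathrm T$ (hence $\mathrm F$ by the same device applied to $\textbf 1-\textbf a$). So $\{f\}$ faithfully implements both $\mathrm T$ and $\mathrm F$. The main obstacle I expect is exactly this step: getting a clean, fully general extraction of the constants from an arbitrary non-C-closed 0-valid-and-1-valid constraint, since the identification-of-variables argument needs care to verify that at each stage the relevant assignment is still uniquely determined (faithfulness), not merely satisfiable.

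Finally, with $\mathrm{T}$ and $\mathrm F$ in hand, I would invoke Lemma \ref{withConst}: since $\mathcal F$ is not affine, $\{f,\mathrm F,\mathrm T\}$ faithfully implements one of OR$_0$, OR$_1$, OR$_2$; and because $\mathcal F$ is 0-valid, the sharper conclusion of that lemma gives that $\{f,\mathrm F\}$ (and a fortiori $\mathcal F$, which also contains the means to make $\mathrm T$) faithfully implements OR$_1$ or OR$_2$. To finish, note that $\mathcal F$ being 1-valid symmetrically gives OR$_0$ or OR$_1$, and the only common possibility of the two conclusions is OR$_1$. Alternatively, one converts an implementation of OR$_0$ (i.e. $x_i \vee x_j$) or OR$_2$ (i.e. $\bar x_i \vee \bar x_j$) into OR$_1$ directly: implementing OR$_2$ on variables $(x,y)$ together with a fresh variable $z$ constrained by a faithful "$z = \bar y$" gadget built from the constants would yield $\bar x \vee z$, which is OR$_1$; but the cleanest route is the "$0$-valid $\Rightarrow$ not OR$_0$, $1$-valid $\Rightarrow$ not OR$_2$" observation, leaving OR$_1$ as the forced output. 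Composition of faithful implementations (Lemma \ref{implementations}, applied at the level of implementations rather than hardness) then gives that $\mathcal F$ faithfully implements OR$_1$, as required.
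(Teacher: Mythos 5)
Your identification gadget in the middle of the argument is already a complete proof, and the two steps you build on top of it are exactly the ones that fail. Take $f\in\mathcal{F}$ and $\textbf{a}$ with $f(\textbf{a})$ true and $f(\textbf{1}-\textbf{a})$ false; 0- and 1-validity force $\textbf{a}\notin\{\textbf{0},\textbf{1}\}$, so both groups of coordinates are nonempty. Identifying the coordinates with $a_i=1$ to $u$ and the rest to $v$ gives a binary constraint $g$ with $g(1,0)=f(\textbf{a})=\mathrm{true}$, $g(0,0)=\mathrm{true}$, $g(1,1)=\mathrm{true}$ and $g(0,1)=f(\textbf{1}-\textbf{a})=\mathrm{false}$. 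These four values determine $g$ completely: $g$ is exactly $u\vee\bar{v}$, which \emph{is} OR$_1$, and since no auxiliary variables are introduced the implementation is trivially faithful. Your claim that $g$ ``is (up to renaming) either the constraint $u$, or $u\vee v$-type, or the implication $v\to u$'' misreads the truth table you have just computed, and it sends you down a detour that cannot be repaired: a 0-valid family can never faithfully implement T (nor a 1-valid family F), because by Definition \ref{implDef} an implementation of T must admit \emph{no} satisfying extension when the function variable is false, whereas 0-validity guarantees that the all-zero assignment satisfies every $\mathcal{F}$-collection of constraints. So the assertion ``$\{f\}$ faithfully implements both T and F'' is false for every family covered by this lemma.

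The closing step is also invalid even if one granted the constants: Lemma \ref{withConst} applied through 0-validity concerns the family $\{f,\mathrm{F}\}$ and applied through 1-validity concerns $\{f,\mathrm{T}\}$; these are different families, so there is no ``common possibility'' to intersect --- the first could yield OR$_2$ and the second OR$_0$ without any contradiction. (You also use non-affineness of $\mathcal{F}$, which is not a hypothesis of the lemma; it does follow, since an affine relation that is 0-valid and 1-valid is a linear subspace containing $\textbf{1}$ and is therefore C-closed, but that requires an argument you do not give.) None of this machinery is needed. The paper itself states this lemma without proof, citing \cite{CreKha01}, and the proof there is precisely the one-step identification argument that your construction already contains: stop as soon as you have written down the truth table of $g$.
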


We also need the following lemmas, which have been adapted from the versions given in \cite{CreKha01}.

\begin{lem} \label{CClosed}
Let $\mathcal{F}$ be a set of C-Closed functions. If $p$ is an odd prime, and if \#$_p$SAT($\mathcal{F} \cup \{\mathrm{F,T}\}$) is \#$_p$P-hard and if $\mathcal{F}$ can faithfully implement the XOR function, then \#$_p$SAT($\mathcal{F}$) is \#$_p$P-hard.
\end{lem}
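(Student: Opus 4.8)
The plan is to build a two-stage reduction from \#$_p$SAT($\mathcal{F}\cup\{\mathrm{F},\mathrm{T}\}$) to \#$_p$SAT($\mathcal{F}$), exploiting the hypothesis that $\mathcal{F}$ faithfully implements $\mathrm{XOR}$ together with the fact that $2$ is invertible modulo the odd prime $p$. The obstruction is the familiar one: a C-closed family can never faithfully implement a constant, since the satisfying set of any $\mathcal{F}$-collection is closed under complementation and so cannot pin a variable to a value. The idea is to simulate the constants only up to a single global complementation, and then repair the resulting factor of two arithmetically.

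\textit{Stage 1 (trade the constants for XOR).} Given an $(\mathcal{F}\cup\{\mathrm{F},\mathrm{T}\})$-collection $\mathcal{C}$ on variables $\mathbf{x}$, I would introduce two fresh variables $a,b$ and form the $(\mathcal{F}\cup\{\mathrm{XOR}\})$-collection $\mathcal{C}'$ obtained by: keeping every $\mathcal{F}$-constraint of $\mathcal{C}$ verbatim; adding the single constraint $\mathrm{XOR}(a,b)$; replacing each $\mathrm{T}(x)$ by $\mathrm{XOR}(x,b)$ and each $\mathrm{F}(x)$ by $\mathrm{XOR}(x,a)$. Every constraint of $\mathcal{C}'$ is either an $\mathcal{F}$-constraint or an $\mathrm{XOR}$-constraint, and both kinds are C-closed, so pointwise complementation is an involution on the set of satisfying assignments of $\mathcal{C}'$; it is fixed-point-free because no assignment can agree with its own complement (in particular on $a$). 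Hence the satisfying assignments of $\mathcal{C}'$ fall into complementary pairs, and since $\mathrm{XOR}(a,b)$ forces $a\neq b$, each pair has exactly one representative with $a=1$, $b=0$. On such a representative $\mathrm{XOR}(x,b)$ reads $x=1$ and $\mathrm{XOR}(x,a)$ reads $x=0$, so these representatives are in bijection with the satisfying assignments of $\mathcal{C}$. Writing $n$, $n'$ for the numbers of satisfying assignments of $\mathcal{C}$, $\mathcal{C}'$, we get $n'=2n$, so from an oracle value $N\equiv n'\pmod p$ we recover $n\bmod p$ as $2^{-1}N\bmod p$; this is the one point where oddness of $p$ is used, and it is exactly why the lemma excludes $p=2$. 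Thus \#$_p$SAT($\mathcal{F}\cup\{\mathrm{F},\mathrm{T}\}$) polynomial-time Turing-reduces to \#$_p$SAT($\mathcal{F}\cup\{\mathrm{XOR}\}$), which is therefore \#$_p$P-hard.

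\textit{Stage 2 (remove XOR).} Each constraint of $\mathcal{F}\cup\{\mathrm{XOR}\}$ is faithfully implemented by $\mathcal{F}$: a constraint of $\mathcal{F}$ implements itself via the one-element collection, and $\mathrm{XOR}$ is faithfully implemented by $\mathcal{F}$ by hypothesis. Applying Lemma \ref{implementations} with the hard set $\mathcal{F}\cup\{\mathrm{XOR}\}$ and implementing set $\mathcal{F}$ then gives that \#$_p$SAT($\mathcal{F}$) is \#$_p$P-hard, and composing the two reductions finishes the proof.

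I expect the delicate part to be Stage 1: one has to check carefully that pointwise complementation genuinely has no fixed point (this is why the edge $\mathrm{XOR}(a,b)$ is included, rather than a single reference variable, which with only $\mathrm{XOR}$ available cannot encode $\mathrm{F}$), that no satisfying assignment of $\mathcal{C}'$ is omitted or double-counted by the pairing, including the degenerate case where some variable carries both a $\mathrm{T}$ and an $\mathrm{F}$ constraint, and that the result is a legitimate reduction for \#$_p$P-completeness even though it is not parsimonious modulo $p$ — it is a polynomial-time Turing reduction making one oracle call followed by multiplication by $2^{-1}$ modulo $p$, which is exactly what Definition \ref{DefnCompleteness} permits. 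Everything else is routine, and Stage 2 simply invokes Lemma \ref{implementations}, whose parsimoniousness claim is imported from \cite{CreKha01}.
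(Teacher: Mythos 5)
Your proposal is correct and takes essentially the same approach as the paper: two reference variables joined by a single XOR constraint, C-closedness giving exactly a factor of two in the count, division by $2^{-1}$ modulo the odd prime $p$, and finally Lemma \ref{implementations} with the hypothesised faithful implementation of XOR to pass from $\mathcal{F}\cup\{\mathrm{XOR}\}$ to $\mathcal{F}$. The only cosmetic difference is that the paper identifies every variable constrained to false (true) with $y_0$ ($y_1$), whereas you keep the original variables and attach each to $a$ or $b$ by an extra XOR constraint; both yield the same relation $n'=2n$.
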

\begin{proof}
We will use the following reduction: Let $\mathcal{C}$ be an $\mathcal{F} \cup \{\mathrm{F,T}\}$-collection of constraint applications on variables $\textbf{x}$ let $y_0$, $y_1$ be two new variables, and replace with $y_0$ any variable constrained to be false, and replace with $y_1$ any variable constrained to be true. Now add the constraint $XOR(y_0,y_1)$. We now have, $\mathcal{C'}$ an $\mathcal{F} \cup XOR$ collection of constraint applications on variables $\textbf{x},y_0,y_1$. Clearly any satisfying assignment of $\mathcal{C}$ can be extended to a satisfying assignment of $\mathcal{C'}$ by setting $s'(y_0)=0$ and $s'(y_1)=1$. Conversely, let $s'$ be a satisfying assignment of $\mathcal{F'}$ then either $s'(y_0)=0$ and $s'(y_1)=1$, in which case $s'$ restricted to $\textbf{x}$ is a satisfying assignment of $\mathcal{C}$ or $s'(y_0)=1$ and $s'(y_1)=0$, in which case it is easy to check that $s(x) = 1-s'(x)$ satisfies all constraints in $\mathcal{C}$. So $\mathcal{C'}$ has precisely twice as many satisfying assignments as $\mathcal{C}$.

Now since p is prime and $p\geq 2$, we can divide by two modulo $p$, giving a Turing reduction from \#$_p$SAT($\mathcal{F} \cup \{\mathrm{F,T}\})$ to \#$_p$SAT($\mathcal{F}$. Finally, since $\mathcal{F}$ can faithfully implement XOR, we have \#$_p$P-hardness of \#$_p$SAT($\mathcal{F}$) by lemma \ref{implementations}.
\end{proof}

\begin{lem} \label{CClosedPowersof2}
Let $\mathcal{F}$ be a set of C-Closed functions. For all integer $k$, if \#$_{2^{k-1}}$SAT($\mathcal{F} \cup \{\mathrm{F,T}\}$) is \#$_{2^{k-1}}$P-hard and if $\mathcal{F}$ can faithfully implement the XOR function then \#$_{2^k}$SAT($\mathcal{F}$) is \#$_{2^k}$P-hard.
\end{lem}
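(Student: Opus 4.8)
The plan is to mimic the proof of Lemma~\ref{CClosed}, but track what happens modulo a prime power rather than modulo a prime. The reduction is exactly the same: given an $\mathcal{F} \cup \{\mathrm{F,T}\}$-collection $\mathcal{C}$ on variables $\textbf{x}$, introduce two fresh variables $y_0$, $y_1$, replace every variable constrained to $\mathrm{F}$ by $y_0$ and every variable constrained to $\mathrm{T}$ by $y_1$, and adjoin the constraint $XOR(y_0,y_1)$ to obtain an $\mathcal{F} \cup \{XOR\}$-collection $\mathcal{C'}$. As in the earlier proof, since $\mathcal{F}$ is C-closed, each satisfying assignment of $\mathcal{C}$ gives rise to exactly two satisfying assignments of $\mathcal{C'}$ (the original one with $y_0 = 0$, $y_1 = 1$, and its complement with $y_0 = 1$, $y_1 = 0$), and conversely every satisfying assignment of $\mathcal{C'}$ arises this way. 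Hence the number of satisfying assignments of $\mathcal{C'}$ is exactly twice that of $\mathcal{C}$.

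The key observation, and the point where this differs from Lemma~\ref{CClosed}, is how to recover the count we want. We are given a $\#_{2^k}$P oracle, i.e.\ we can learn $N' := |\{s' : s' \text{ satisfies } \mathcal{C'}\}| \bmod 2^k$, and we want $N := |\{s : s \text{ satisfies } \mathcal{C}\}| \bmod 2^{k-1}$. Since $N' = 2N$, we have $N' \equiv 2N \pmod{2^k}$, and dividing through by $2$ gives $N \equiv N'/2 \pmod{2^{k-1}}$ — note $N'$ is even so $N'/2$ is a well-defined integer, and the division is legitimate precisely because we drop one factor of $2$ from the modulus. Thus from the value $N' \bmod 2^k$ returned by the oracle we compute $N \bmod 2^{k-1}$ directly. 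This is a polynomial-time Turing reduction from $\#_{2^{k-1}}$SAT($\mathcal{F} \cup \{\mathrm{F,T}\}$) to $\#_{2^k}$SAT($\mathcal{F} \cup \{XOR\}$). Finally, since $\mathcal{F}$ faithfully implements $XOR$, Lemma~\ref{implementations} (with $k$ replaced by $2^k$) lets us replace the $XOR$ constraints by $\mathcal{F}$-constraints, giving $\#_{2^k}$P-hardness of $\#_{2^k}$SAT($\mathcal{F}$), as required.

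I do not expect a serious obstacle here; the only subtlety is bookkeeping with the moduli — one must be careful that the oracle answers are modulo $2^k$ while the hardness being transferred is for the modulus $2^{k-1}$, and that the single cancellable factor of $2$ is exactly what bridges the gap. One should also check that all the objects in the reduction (the fresh variables, the substitutions, the $XOR$ constraint, the faithful-implementation gadgets for $XOR$ supplied by hypothesis) are constructible in polynomial time, which is immediate, and that $\mathcal{C'}$ genuinely has $y_0, y_1$ as honest variables so that its satisfying assignments are counted over all of $\textbf{x}, y_0, y_1$. A minor point worth stating explicitly in the write-up is the base case implicit in iterating this lemma together with Lemma~\ref{CClosed}: $\#_2$P-hardness of $\#_2$SAT($\mathcal{F}$) follows from the machinery in Section~\ref{SectionPrelims} and the earlier lemmas, so repeated application of Lemma~\ref{CClosedPowersof2} will yield hardness for every power of $2$.
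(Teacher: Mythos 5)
Your proposal is correct and follows essentially the same route as the paper: the same substitution-plus-$XOR(y_0,y_1)$ reduction from Lemma \ref{CClosed}, the observation that the constructed formula has exactly twice as many satisfying assignments, the resulting passage from a count modulo $2^k$ to a count modulo $2^{k-1}$ by cancelling the single factor of $2$, and finally Lemma \ref{implementations} with the faithful implementation of $XOR$ to land on an $\mathcal{F}$-formula. Your write-up is in fact more explicit than the paper's about the modulus bookkeeping, which the paper leaves as ``clearly''.
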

\begin{proof}
The reduction used is the same as in the previous proof. Now, given a $\mathcal{F} \cup \{\mathrm{F,T}\}$-formula, $F$, we have constructed a $\mathcal{F}$ formula, $F'$ with twice as many satisfying assignments as $F$. Now, any algorithm which could count the number of solutions of $F'$ modulo $2^k$ in polynomial time could clearly be used to count the number of solutions of $F$ modulo $2^{k-1}$ in polynomial time. Therefore \#$_{2^{k-1}}$P-hardness of \#$_{2^{k-1}}$SAT($\mathcal{F} \cup \{\mathrm{F,T}\}$) implies \#$_{2^k}$P-hardness of \#$_{2^k}$SAT($\mathcal{F}$) as required.
\end{proof}

\begin{lem} \label{CClosedFalse}
Let $\mathcal{F}$ be a set of C-Closed functions. If $p$ is an odd prime, and if \#$_p$SAT($\mathcal{F} \cup \{$F$\}$) is \#$_p$P-hard and if $\mathcal{F}$ then \#$_p$SAT($\mathcal{F}$) is \#$_p$P-hard.
\end{lem}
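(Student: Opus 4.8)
The plan is to mimic the reduction of Lemma \ref{CClosed}, but in a simplified form: here we only need to eliminate the constant $\mathrm{F}$ (and not also $\mathrm{T}$), so a single fresh variable together with a global-complementation argument suffices, and no appeal to $XOR$ is needed. Concretely, I would exhibit a polynomial-time reduction from \#$_p$SAT($\mathcal{F}\cup\{\mathrm{F}\}$) to \#$_p$SAT($\mathcal{F}$) that multiplies the number of satisfying assignments by exactly $2$, and then divide by $2$ modulo $p$, which is legitimate because $p$ is odd.

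First I would take an arbitrary $\mathcal{F}\cup\{\mathrm{F}\}$-collection of constraint applications $\mathcal{C}$ on a variable set $\mathbf{x}$, and let $Z\subseteq\mathbf{x}$ be the set of variables occurring in an $\mathrm{F}$-constraint. If $Z=\emptyset$ then $\mathcal{C}$ is already an $\mathcal{F}$-collection and there is nothing to prove, so assume $Z\neq\emptyset$. Introduce one new variable $y_0$, delete all $\mathrm{F}$-constraints, and replace every occurrence of every variable of $Z$ by $y_0$; call the resulting $\mathcal{F}$-collection $\mathcal{C}'$, on the variable set $(\mathbf{x}\setminus Z)\cup\{y_0\}$. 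Identifying variables inside a constraint application is permitted, so $\mathcal{C}'$ is a legitimate $\mathcal{F}$-collection and is clearly computable in polynomial time.

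The core step is to prove $|\mathrm{sol}(\mathcal{C}')| = 2\,|\mathrm{sol}(\mathcal{C})|$. The satisfying assignments of $\mathcal{C}'$ with $y_0$ set to $0$ are in bijection with those of $\mathcal{C}$: a satisfying assignment of $\mathcal{C}$ restricts to one of $\mathcal{C}'$ after setting $y_0:=0$ (consistent, since every variable of $Z$ is forced to $0$ in $\mathcal{C}$), and conversely a satisfying assignment of $\mathcal{C}'$ with $y_0=0$ extends to $\mathbf{x}$ by setting $Z:=0$, which automatically satisfies the deleted $\mathrm{F}$-constraints. Next, global complementation $t\mapsto\mathbf{1}-t$ is an involution on $\mathrm{sol}(\mathcal{C}')$ — it preserves satisfaction because $\mathcal{F}$ is C-closed and every constraint of $\mathcal{C}'$ is an $\mathcal{F}$-application — it has no fixed point because the variable set is non-empty, and it swaps $\{t:t(y_0)=0\}$ with $\{t:t(y_0)=1\}$. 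Hence these two classes are equinumerous, and since $\mathrm{sol}(\mathcal{C}')$ is their disjoint union we obtain the claimed factor of $2$.

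Finally I would conclude: given an oracle for \#$_p$SAT($\mathcal{F}$), compute $|\mathrm{sol}(\mathcal{C}')| \bmod p$ and multiply by the inverse of $2$ modulo $p$ to recover $|\mathrm{sol}(\mathcal{C})| \bmod p$; this is a polynomial-time reduction witnessing \#$_p$SAT($\mathcal{F}\cup\{\mathrm{F}\}$) $\leq_{\#_p}$ \#$_p$SAT($\mathcal{F}$), so the hypothesised \#$_p$P-hardness of the former transfers to $\mathcal{F}$. I do not expect a genuine obstacle, the argument being routine once Lemma \ref{CClosed} is in hand; the only points needing care are the degenerate case $Z=\emptyset$ (and the related case where $y_0$ lands in no constraint, where the factor of $2$ then simply comes from $y_0$ being free) and the observation, used to make complementation fixed-point-free, that on a non-empty Boolean variable set no assignment equals its own complement.
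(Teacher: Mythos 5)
Your proposal is correct and follows essentially the same route as the paper: replace all variables constrained to be false by a single fresh variable, observe via C-closedness that the resulting $\mathcal{F}$-collection has exactly twice as many satisfying assignments, and divide by $2$ modulo the odd prime $p$. The paper states this more tersely (deferring to the proof of Lemma \ref{CClosed} for the division step), while you spell out the bijection and the complementation involution, but the argument is the same.
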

\begin{proof}
We construct a $\mathcal{F}$ formula from a given $\mathcal{F} \cup \{ \mathrm{F}\}$ formula by replacing all variables which are constrained to be false with a new variable $x_0$. This formula then has twice as many satisfying assignments as the original, and we proceed as in the proof of lemma \ref{CClosed}.
\end{proof}

\begin{lem} \label{CClosedFalsePowersof2}
Let $\mathcal{F}$ be a set of C-Closed functions. For all integer $k$, if \#$_{2^{k-1}}$SAT($\mathcal{F} \cup \{$F$\}$) is \#$_{2^{k-1}}$P-hard  then \#$_{2^k}$SAT($\mathcal{F}$) is \#$_{2^k}$P-hard.
\end{lem}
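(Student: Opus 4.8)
The plan is to combine the reduction from the proof of Lemma \ref{CClosedFalse} with the modular bookkeeping used in the proof of Lemma \ref{CClosedPowersof2}; indeed this lemma stands to Lemma \ref{CClosedFalse} exactly as Lemma \ref{CClosedPowersof2} stands to Lemma \ref{CClosed}, and the same modification of the argument applies. First I would take an arbitrary $\mathcal{F} \cup \{\mathrm{F}\}$-collection of constraint applications $\mathcal{C}$ on variables $\textbf{x}$, and form the $\mathcal{F}$-collection $\mathcal{C}'$ by deleting every constraint of the form $\mathrm{F}(x)$ and replacing every variable that was so constrained with a single fresh variable $x_0$, so that every remaining constraint is an $\mathcal{F}$-constraint.

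Next I would verify the combinatorial core, namely that $\mathcal{C}'$ has exactly twice as many satisfying assignments as $\mathcal{C}$. Every satisfying assignment $s$ of $\mathcal{C}$ extends to a satisfying assignment of $\mathcal{C}'$ by putting $s'(x_0) = 0$; conversely, given a satisfying assignment $s'$ of $\mathcal{C}'$, either $s'(x_0)=0$, in which case the restriction of $s'$ to $\textbf{x}$ is a satisfying assignment of $\mathcal{C}$, or $s'(x_0)=1$, in which case---using that $\mathcal{F}$ is C-closed---the complementary assignment $x \mapsto 1 - s'(x)$ satisfies every $\mathcal{F}$-constraint and sets all the erstwhile false variables to $0$, hence restricts to a satisfying assignment of $\mathcal{C}$. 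This is the standard $2$-to-$1$ correspondence, so if $N$ is the number of satisfying assignments of $\mathcal{C}$ then $\mathcal{C}'$ has exactly $2N$ of them.

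Finally I would observe that $2N \bmod 2^k = 2\,(N \bmod 2^{k-1})$, so any polynomial-time procedure computing the number of satisfying assignments of $\mathcal{C}'$ modulo $2^k$ immediately yields the number of satisfying assignments of $\mathcal{C}$ modulo $2^{k-1}$; hence an oracle for \#$_{2^k}$SAT($\mathcal{F}$) solves \#$_{2^{k-1}}$SAT($\mathcal{F} \cup \{\mathrm{F}\}$) in polynomial time, and \#$_{2^{k-1}}$P-hardness of the latter forces \#$_{2^k}$P-hardness of the former. There is no serious obstacle here; the only points needing a moment's care are the appeal to C-closedness in the $2$-to-$1$ argument (which is precisely why the hypothesis on $\mathcal{F}$ is present) and the triviality that losing one factor of $2$ from the modulus is harmless---in particular no faithful implementation of XOR is required, since only the single constant $\mathrm{F}$, rather than both constants, has been eliminated, so the reduction goes directly from \#$_{2^{k-1}}$SAT($\mathcal{F}\cup\{\mathrm{F}\}$) to \#$_{2^k}$SAT($\mathcal{F}$) without an intermediate constraint family.
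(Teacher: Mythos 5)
Your proposal is correct and follows essentially the same route as the paper, which simply combines the variable-identification reduction of Lemma \ref{CClosedFalse} (replacing all false-constrained variables by a fresh variable $x_0$, doubling the count via C-closedness) with the observation from Lemma \ref{CClosedPowersof2} that a count modulo $2^k$ of the doubled instance yields the original count modulo $2^{k-1}$. You have merely spelled out the details that the paper leaves implicit by reference to the two preceding lemmas.
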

\begin{proof}
Using the same reduction as in the proof of the previous lemma, and then the same reasoning as in the proof of lemma \ref{CClosedPowersof2} we obtain the desired result.\end{proof}

Finally, we require the observation that for C-Closed functions, the number of satisfying assignments modulo 2 is always equal to zero - as for any satisfying assignment $\textbf{s}$, the assignment \textbf{$\textbf{1-s}$} is also satisfying.

\begin{thm}\label{countCSPmodk}
Given a constraint set $\mathcal{F}$, and an integer k, the problem \#$_k$SAT($\mathcal{F}$) is in FP if $\mathcal{F}$ is an affine family of constraints, or if $k = 2$ and $\mathcal{F}$ is C-closed, and it is otherwise $\#_k$P-complete.
\end{thm}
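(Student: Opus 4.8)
The plan is to handle the statement in three parts: membership of \#$_k$SAT($\mathcal{F}$) in \#$_k$P, the two tractable cases, and \#$_k$P-hardness in all remaining cases. Membership is routine, since an $\mathcal{F}$-formula is verified in polynomial time, so a nondeterministic machine guessing an assignment has a number of accepting paths equal to the number of solutions. For the affine case, every $\mathcal{F}$-formula amounts to a system of linear equations over $\mathrm{GF}(2)$; Gaussian elimination decides consistency and, when consistent, returns the dimension $d$ of the affine solution space, and we output $2^{d}$ reduced modulo $k$ by fast exponentiation. For the case $k=2$ with $\mathcal{F}$ C-closed, the involution $\mathbf{s}\mapsto\mathbf{1}-\mathbf{s}$ is fixed-point-free on the satisfying assignments of any $\mathcal{F}$-formula, so their number is always even and the algorithm outputs $0$.

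For hardness I would start from a non-affine $f\in\mathcal{F}$ and build a family that faithfully implements one of $\mathrm{OR}_0,\mathrm{OR}_1,\mathrm{OR}_2$, then invoke Theorem \ref{ORComplexity} together with Lemma \ref{implementations}. If $\mathcal{F}$ is not C-closed, the argument branches on $0$- and $1$-validity: when $\mathcal{F}$ is both $0$-valid and $1$-valid, Lemma \ref{nonCClosed} already implements $\mathrm{OR}_1$; when $\mathcal{F}$ fails one of the two, Lemma \ref{not01val}(ii) supplies the constant $\mathrm{T}$ or $\mathrm{F}$ (both, if it fails both), after which Lemma \ref{withConst}, applied to $f$ with whichever constants are available, produces one of the three $\mathrm{OR}$ relations. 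Combined with membership, this gives \#$_k$P-completeness for every $k$ in the non-C-closed case.

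When $\mathcal{F}$ is C-closed we are, by hypothesis, in the case $k\neq 2$, and it is here that the modulus really matters. A C-closed family faithfully implements only C-closed functions (the complement of a unique satisfying extension is again one), and none of the three $\mathrm{OR}$ relations is C-closed; instead we use that $\mathcal{F}\cup\{\mathrm{F},\mathrm{T}\}$ is not C-closed and implements one of them by Lemma \ref{withConst}, so \#$_j$SAT($\mathcal{F}\cup\{\mathrm{F},\mathrm{T}\}$) is \#$_j$P-hard for every $j\geq 2$ by Lemma \ref{implementations}; when $\mathcal{F}$ is $0$-valid one can even work with $\mathcal{F}\cup\{\mathrm{F}\}$ alone. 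It then remains to strip the constants: identifying all variables frozen by $\mathrm{F}$ (and, separately, those frozen by $\mathrm{T}$) to a single fresh variable and using the complement involution shows the resulting $\mathcal{F}$-formula has exactly twice as many solutions as before, with an $\mathrm{XOR}$ constraint --- available from Lemma \ref{not01val}(i) when $\mathcal{F}$ is neither $0$-valid nor $1$-valid --- linking the fresh variables. For odd $k$ this factor of two is invertible (Lemmas \ref{CClosed} and \ref{CClosedFalse}); for $k$ a power of two it is absorbed because an algorithm counting modulo $2^{m}$ yields one counting modulo $2^{m-1}$ (Lemmas \ref{CClosedPowersof2} and \ref{CClosedFalsePowersof2}); for a general $k$ one treats its $2$-part and its odd part separately and recombines.

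\textbf{The main obstacle} is this C-closed case: seeing that the constants $\mathrm{F},\mathrm{T}$ cannot be avoided is easy, but removing them inescapably multiplies the solution count by a power of two, and keeping track of that factor modulo $k$ is the delicate point --- it disappears modulo an odd number and can be bootstrapped away modulo a power of two, but coordinating the two for a $k$ of mixed parity takes care. This is exactly the phenomenon forcing the single exception at $k=2$, where for a C-closed family the solution count is unconditionally even, so no information about it survives modulo $2$ and the problem falls into $\mathrm{FP}$.
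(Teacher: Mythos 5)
Your proposal is correct and takes essentially the same route as the paper: the same two tractable cases (Gaussian elimination for affine constraints, the fixed-point-free complement involution for $k=2$ with $\mathcal{F}$ C-closed) and the same hardness machinery, namely Lemmas \ref{not01val}, \ref{withConst} and \ref{nonCClosed} to implement one of the OR relations, Lemma \ref{implementations} with Theorem \ref{ORComplexity} for the base hardness, and Lemmas \ref{CClosed}--\ref{CClosedFalsePowersof2} to strip the constants in the C-closed case. The only differences are organizational --- you branch on properties of the whole family $\mathcal{F}$ rather than of the chosen non-affine constraint, and for composite $k$ you split into the $2$-part and odd part and recombine, where the paper passes to an odd-prime or power-of-two divisor --- and these do not change the substance of the argument.
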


\begin{proof}
There are several cases to consider, first we note that \#$_k$SAT($\mathcal{F}$) is clearly in \#$_k$P. Now, if every constraint in $\mathcal{F}$ is affine, then we can consider solving \#SAT($\mathcal{F}$) as the problem of solving a system of linear equations of GF(2), this can be done using Gaussian elimination in polynomial time. Since we can solve \#SAT($\mathcal{F}$) in polynomial time, we can clearly solve \#$_k$SAT($\mathcal{F}$) in polynomial time. Also, if  $\mathcal{F}$ is C-closed, then clearly $\mathcal{F}$ has an even number of satisfying assignments, so the problem \#$_2$SAT($\mathcal{F}$) is trivial, and can certainly be solved in polynomial time.

Now, suppose $\mathcal{F}$ contains a function, $g$, which is not affine, and that if $k=2$ then $\mathcal{F}$ is not C-closed. There are 3 cases.

\begin{description}
  \item[g is neither 0-valid nor 1-valid] Then family $\{g,$F,T$\}$ can faithfully implement one of OR$_0$, OR$_1$ and OR$_2$ (Lemma \ref{withConst}). Hence by lemma \ref{implementations} and theorem \ref{ORComplexity}, \#$_k$SAT($\mathcal{F} \cup \{$F,T$\}$) is \#$_k$P-complete for all $k$. If $\mathcal{F}$ contains a function which is not C-closed, we can faithfully implement F and T by lemma \ref{not01val} so we get \#$_k$-hardness for \#$_k$SAT($\mathcal{F}$). Otherwise we can faithfully implement XOR by lemma \ref{not01val} and we get \#$_p$-hardness for all odd primes $p$ using lemma \ref{CClosed}, and \#$_{2^l}$-hardness for all $l \geq 2$ using lemma \ref{CClosedPowersof2}. Now we have \#$_k$-hardness for all $k \geq 3$ (as all $k\geq3$ have as a factor either some odd prime or some power of two greater than or equal to four, and counting modulo $k$ is at least as hard as counting modulo any factor of $k$).

  \item[g is 0-valid but not 1-valid (or vice versa)] In this case, $\{g,$F$\}$ can faithfully implement one of the functions OR$_1$ or OR$_2$ (lemma \ref{withConst}). Also, clearly $g$ itself can faithfully implement F since it is 0-valid but not 1-valid. Thus $\mathcal{F}$ can faithfully implement one of OR$_1$ or OR$_2$. Then by the lemma \ref{implementations} and theorem \ref{ORComplexity}, we get \#$_k$-hardness for \#$_k$SAT($\mathcal{F}$).
      Note that in this case $g$ itself is not C-Closed as $g(0)$ = true and $g(1)$ = false so we don't need to deal with the possibility that $\mathcal{F}$ is C-Closed.

  \item[g is 0-valid and 1-valid] Then if $g$ is not C-closed, $g$ can faithfully implement OR$_1$ (lemma \ref{nonCClosed}) which gives \#$_k$-hardness for \#$_k$SAT($\mathcal{F}$).
      Otherwise $\{g,F\}$ can faithfully implement one of OR$_1$ or OR$_2$ (lemma \ref{withConst}), which gives \#$_k$-hardness of \#$_k$SAT($\mathcal{F},F$). Therefore we can use lemmas \ref{CClosedFalse} and \ref{CClosedFalsePowersof2} to get \#$_k$-hardness of \#$_k$SAT$(\mathcal{F})$.
\end{description}
\end{proof}

\bibliographystyle{plain}
\bibliography{CSPbib}

\begin{thebibliography}{10}

\bibitem{CaiHem89}
J.~Cai and L.~A. Hemachandra.
\newblock On the power of parity polynomial time.
\newblock In {\em Proceedings of the 6th Annual Symposium on Theoretical
  Aspects of Computer Science on STACS 89}, pages 229--239, New York, NY, USA,
  1989. Springer-Verlag New York, Inc.

\bibitem{CreHer96}
Nadia Creignou and Miki Hermann.
\newblock Complexity of generalized satisfiability counting problems.
\newblock {\em Inf. Comput.}, 125(1):1--12, 1996.

\bibitem{Hert90}
U.~Hertrampt.
\newblock Relations among mod-classes.
\newblock {\em Theor. Comput. Sci.}, 74(3):325--328, 1990.

\bibitem{Lad75}
Richard~E. Ladner.
\newblock On the structure of polynomial time reducibility.
\newblock {\em J. ACM}, 22(1):155--171, 1975.

\bibitem{lin86}
Nathan Linial.
\newblock Hard enumeration problems in geometry and combinatorics.
\newblock {\em SIAM J. Algebraic Discrete Methods}, 7(2):331--335, 1986.

\bibitem{CreKha01}
Sanjeev~Khanna Nadia~Creignou and Madhu Sudan.
\newblock Complexity classifications of boolean constraint satisfaction
  problems.
\newblock {\em SIAM Monographs on Discrete Mathematics and Applications 7},
  2001.

\bibitem{Sim75}
Janos Simon.
\newblock {\em On some central problems in computational complexity.}
\newblock PhD thesis, Ithaca, NY, USA, 1975.

\bibitem{Sch78}
T.J.Schaefer.
\newblock The complexity of satisfiability problems.
\newblock {\em Proceedings, 10th Symposium on Theory of Computing, San Diego
  CA}, 1978.

\bibitem{Val78}
Leslie~G. Valiant.
\newblock The complexity of computing the permanent.
\newblock {\em Theoretical Computer Science}, (8):189--201, 1978.

\bibitem{Val06}
Leslie~G. Valiant.
\newblock Accidental algorthims.
\newblock In {\em FOCS '06: Proceedings of the 47th Annual IEEE Symposium on
  Foundations of Computer Science (FOCS'06)}, pages 509--517, Washington, DC,
  USA, 2006. IEEE Computer Society.

\end{thebibliography}

\end{document}